\documentclass[runningheads,oribibl,envcountsame]{llncs}

\usepackage{xcolor}

\usepackage[]{algorithm2e}

\usepackage{amsmath,amssymb}

\usepackage[numbers,sort&compress]{natbib}

\usepackage{hyperref}

\usepackage[capitalize]{cleveref}
\crefformat{footnote}{#2\footnotemark[#1]#3}
\Crefname{appsec}{Appendix}{Appendices}

\hypersetup{
    colorlinks=true,
    linkcolor= blue,
    filecolor=blue,
    citecolor = blue,      
    urlcolor=cyan,
    }

\usepackage[capitalize]{cleveref}

\usepackage{thm-restate}

\newcommand{\LPgs}{\textup{GS}}
\newcommand{\LPgsD}{\textup{GS-D}}
\newcommand{\LPgsa}{\textup{GS-A}} 
\newcommand{\LPwf}{\textup{GS-W}} 
\newcommand{\LPpm}{\textup{GS-P}} 

\newcommand{\maximize}{\textbf{maximize:}}
\newcommand{\minimize}{\textbf{minimize:}}
\newcommand{\subjectto}{\textbf{s.t.:}}

\newcommand{\st}{ : }

\def\mathrlap{\mathpalette\mathrlapinternal} 
\def\mathrlapinternal#1#2{\rlap{$\mathsurround=0pt#1{#2}$}}

\newcommand{\elsum}[1]{\sum_{\mathrlap{#1}}\;}

\newcommand{\argmax}{\operatorname{argmax}}

\title{A Constant-Factor Approximation for Generalized Malleable Scheduling under $M^\natural$-Concave Processing Speeds}

\titlerunning{Generalized Malleable Scheduling under $M^\natural$-Concave Processing Speeds}

\author{%
	Dimitris Fotakis\inst{1} 
	\and Jannik Matuschke\inst{2} 
	\and Orestis Papadigenopoulos\inst{3}
}%

\authorrunning{D. Fotakis, J. Matuschke, and O. Papadigenopoulos}

\institute{%
	National Technical University of Athens, \texttt{fotakis@cs.ntua.gr} 
	\and 
	KU Leuven, \texttt{jannik.matuschke@kuleuven.be}
	\and 
	The University of Texas at Austin, \texttt{papadig@cs.utexas.edu}
}%

\begin{document}

\maketitle

\begin{abstract}
In generalized malleable scheduling, jobs can be allocated and processed simultaneously on multiple machines so as to reduce the overall makespan of the schedule.
The required processing time for each job is determined by the joint processing speed of the allocated machines.
We study the case that processing speeds are job-dependent $M^\natural$-concave functions and provide a constant-factor approximation for this setting, significantly expanding the realm of functions for which such an approximation is possible.
Further, we explore the connection between malleable scheduling and the problem of fairly allocating items to a set of agents with distinct utility functions, devising a black-box reduction that allows to obtain resource-augmented approximation algorithms for the latter.
\end{abstract}

\section{Introduction}

Parallel execution of a job on multiple machines is often used to optimize the overall makespan in time-critical task scheduling systems. Practical applications are numerous and diverse, varying from task scheduling in production and logistics, such as quay crane allocation in naval logistics~\citep{imai2008simultaneous,blazewicz2011berth} and cleaning activities on trains~\citep{bartolini2017scheduling}, to optimizing the performance of computationally demanding tasks, such as web search index update~\citep{wu2015algorithms} and training neural networks~\citep{fujiwara2018effectiveness} (see also \cite{fotakis2020malleable,fotakis2021assigning} for further references and examples). 

The model of \emph{malleable} (a.k.a. \emph{moldable}) jobs, introduced by \citet{du1989complexity}, captures the algorithmic aspects of scheduling jobs that can be executed simultaneously on multiple machines. A malleable job can be assigned to an arbitrary subset of machines to be processed \emph{non-preemptively} and in \emph{unison}, i.e., with the same starting and completion time on each of the allocated machines. Importantly, the scheduler decides on the degree of parallelization for each job, by choosing the set of machines allocated to each job (in contrast to non-malleable parallel machine models, where a single machine is allocated to each task).

Despite the significant interest in the model, most of the work on scheduling malleable jobs considers the case of identical machines, where the processing time of a job only depends on the number of allocated machines. A common assumption is that a job's processing time is non-increasing in the number $k$ of allocated machines, while a job's work (i.e., $k$ times the job's processing time on $k$ machines) is non-decreasing in $k$.  This is usually referred to as the \emph{monotone work assumption} and accounts for communication and coordination overhead due to parallelization. The approximability of makespan minimization in the setting of malleable job scheduling on identical machines is very well understood. Constant-factor approximation algorithms are known since the work of \citet{turek1992approximate}. Following a line of successive improvements in the approximation factor~\citep{mounie1999efficient,jansen2002linear,mounie2007frac32}, two recent results by \citet{jansen2010approximation} and \citet{jansen2018scheduling} implied a polynomial-time approximation scheme for malleable scheduling on identical machines. 

On the other hand, scheduling malleable jobs on non-identical machines has received much less attention. As a natural first step, building on previous work by \citet{correa2015strong} on the closely related \emph{splittable job} model, \citet{fotakis2020malleable} introduced the setting of \emph{speed-implementable processing-time} functions, where each machine $i$ has an unrelated ``speed'' $s_{ij}$ for each job $j$ and a job's processing time is a non-decreasing function of the total allocated speed fulfilling a natural generalization of the non-decreasing work assumption. They devised an LP-based $3.16$-approximation for this setting.

However, as recently observed in \cite{fotakis2021assigning}, the aforementioned models, in which the processing power of a heterogeneous set of machines is expressed by a single scalar, cannot capture the (possibly complicated) combinatorial interaction effects arising among different machines processing the same job. Practical settings where such complicated interdependencies among machines may arise include modern heterogeneous parallel computing systems, typically consisting of CPUs, GPUs, and I/O nodes~\citep{bleuse2017scheduling}, and highly distributed processing systems, where massive parallelization is subject to constraints imposed by the underlying communication network~\citep{bampis2020scheduling}; see \cite{fotakis2021assigning} for further references and examples. Having such practical settings in mind, \citet{fotakis2021assigning} introduced a \emph{generalized malleable scheduling} model, where the processing time $f_j(S) = 1/g_j(S)$ of a job $j$ depends on a job-specific processing speed function $g_j(S)$ of the set of machines $S$ allocated to $j$. In addition to motivating and introducing the model, they derived an LP-based $5$-approximation for scaled matroid rank processing speeds, and a $O(\log \min\{ n, m \})$-approximation algorithm for submodular processing speeds, where $n$ is the number of jobs and $m$ is the number of machines. 

\citet{fotakis2021assigning} left open whether there are processing speed functions more general than scaled matroid rank functions for which generalized malleable scheduling can be approximated within a constant factor. In this work, employing notions and techniques from the field of Discrete Convexity \cite{murota2003discrete}, we present a constant-factor approximation algorithm for job-dependent \emph{$M^\natural$-concave} (a.k.a. \emph{gross substitute}) processing speed functions, thus significantly expanding the realm of functions for which such an approximation is possible. 
We further point out a connection between malleable scheduling and the so-called max-min fair allocation problem, devising a black-box reduction that allows to obtain resource-augmented approximation algorithms for the latter.

\vfill

\subsection{Generalized Malleable Scheduling and Main Results}

To discuss our contribution in more detail, we need to formally introduce the \emph{Generalized Malleable Scheduling} problem. We are given a set of \emph{jobs} $J$ to be assigned to a set of \emph{machines} $M$. Each job $j \in J$ is equipped with a \emph{processing time function} $f_j : 2^M \rightarrow \mathbb{R}_{\geq 0}$ that specifies the time $f_j(S)$ needed for the completion of $j$, when assigned to a subset of machines $S \subseteq M$. 
We assume that functions $f_j$ are accessed through a \emph{value oracle} that, given $S \subseteq M$, returns the value of $f_j(S)$.
A \emph{schedule} consists of two parts: (i) an \emph{assignment} $\mathbf{S} = (S_j)_{j \in J}$ of each job $j \in J$ to a non-empty set of machines $S_j \subseteq M$; and (ii) a \emph{starting time} vector $\mathbf{t} = (t_j)_{j \in J}$, specifying the time $t_j$ at which jobs in $S_j$ start to jointly process job $j$. A schedule is \emph{feasible}, if $S_j \cap S_{j'} = \emptyset$ for all $j, j' \in J$ with $t_j < t_{j'} < t_{j} + f_j(S_j)$, i.e., while a machine is involved in processing a job $j$, it cannot start processing any other job $j'$. The objective is to find a feasible schedule of minimum \emph{makespan} $C(\mathbf{S}, \mathbf{t}) := \max_{j \in J} \{ t_j + f_j(S_j) \}$.

An interesting relaxation of the above \textsc{Scheduling} problem is the \textsc{Assignment} problem, asking for an assignment $\mathbf{S}$ that minimizes the \emph{load} $L(\mathbf{S}) := \max_{i \in M} \sum_{j \in J: i \in S_j} f_j(S_j)$.
Clearly, the load of an assignment is a lower bound on the makespan of any feasible schedule using that same assignment. 

The \emph{processing speed} of a set of machines $S$ for a job $j$ is $g_j(S) := 1/f_j(S)$. Under the assumption that for each job $j \in J$, the processing speed function $g_j$
is submodular, 
\cite[Theorem~1]{fotakis2021assigning} shows that any assignment of maximum machine load $C$ can be transformed in polynomial time into a so-called \emph{well-structured} schedule, where each machine shares at most one job with another machine, of makespan at most $\frac{2e}{e-1}C$. Thus, suffering a small constant-factor loss in the approximation ratio, we can approximate the optimal makespan in Generalized Malleable Scheduling by approximating the \textsc{Assignment} problem. 

Our main contribution is an $O(1)$-approximation for the \textsc{Assignment} problem with $M^\natural$-concave processing speeds (see \cref{sec:submodularity} for definitions of submodularity and $M^\natural$-concavity). Because all $M^\natural$-concave functions are submodular, our results, together with the aforementioned transformation~\cite{fotakis2021assigning}, imply the following theorem. 

\begin{theorem}\label{thm:main}
    There is a polynomial time constant-factor approximation algorithm for \textsc{Assignment} and \textsc{Scheduling} with $M^\natural$-concave processing speeds.
\end{theorem}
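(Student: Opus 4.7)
The plan is to reduce to the \textsc{Assignment} problem and give a constant-factor approximation there. Since every $M^\natural$-concave function is submodular, the transformation of \cite[Theorem 1]{fotakis2021assigning} mentioned in the excerpt turns any assignment of load $C$ into a well-structured schedule of makespan at most $\frac{2e}{e-1}C$. Hence it suffices to show that the optimal load $L^*$ of \textsc{Assignment} can be approximated within a constant factor when each $g_j = 1/f_j$ is $M^\natural$-concave, and compose the two guarantees.

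My first step is to guess $L^*$ by binary search over polynomially many threshold values, retaining only sets $S$ with $f_j(S) \leq L^*$ (equivalently, $g_j(S) \geq 1/L^*$) as admissible for job $j$. I then write a natural LP relaxation of \textsc{Assignment}: either the configuration LP with variables $x_{j,S}$ subject to $\sum_{S} x_{j,S} = 1$ per job and a load cap $\sum_{j}\sum_{S\ni i} x_{j,S}\, f_j(S) \leq L^*$ per machine, or an equivalent relaxation written over the concave closure of $g_j$. Solvability in polynomial time is not an obstacle: dual separation amounts to maximizing an $M^\natural$-concave function over $2^M$, which is tractable by standard discrete-convex-analysis algorithms~\cite{murota2003discrete}, and the concave closure of an $M^\natural$-concave function admits an explicit tractable description via the Lov\'asz-type extension associated with laminar/gross-substitute structure.

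The heart of the argument, and the main obstacle, is the rounding step. For generally submodular speeds, only an $O(\log\min\{n,m\})$ guarantee is known~\cite{fotakis2021assigning}, so I have to exploit the strictly stronger structure of $M^\natural$-concavity — namely the single-exchange axiom (gross substitutes). The idea is to iteratively integralize the fractional assignment by exchange moves inside each job's support: whenever a machine $i$ is shared fractionally between two jobs $j, j'$, the $M^\natural$-concave exchange property guarantees a complementary machine $i'$ such that swapping $i \leftrightarrow i'$ between $j$ and $j'$ changes $g_j + g_{j'}$ by at most a controlled amount. This is the gross-substitute analogue of pipage rounding and should yield an integral assignment whose load is at most a constant multiple of $L^*$, by charging the loss at each exchange against the LP lower bound $L^*$ and using the local-to-global optimality that distinguishes $M^\natural$-concave from merely submodular functions.

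Finally, I combine the constant-factor \textsc{Assignment} approximation with the $\frac{2e}{e-1}$ factor from \cite{fotakis2021assigning} to obtain the constant-factor approximation for \textsc{Scheduling} claimed in \cref{thm:main}. The technically delicate point — and where I expect most of the work to lie — is in rigorously bounding the cumulative load increase over all exchange rounding steps by a constant, as this requires a careful potential argument specific to the exchange structure of $M^\natural$-concave functions rather than the weaker marginal inequalities available for submodular functions.
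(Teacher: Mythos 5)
Your overall skeleton matches the paper: reduce \textsc{Scheduling} to \textsc{Assignment} via the transformation of \cite[Theorem~1]{fotakis2021assigning} (valid since $M^\natural$-concave implies submodular), binary-search a target load $C$, solve a configuration LP by dual separation (which is indeed tractable because separating amounts to maximizing an $M^\natural$-concave function), and then round. Up to that point you are on track.

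The genuine gap is the rounding step, which is the entire technical content of the result, and your proposal replaces it with an unproven hope. You propose a ``gross-substitute analogue of pipage rounding'': local exchange moves between pairs of jobs sharing a machine fractionally, with the loss at each step charged against the LP bound. You yourself flag that bounding the cumulative load increase by a constant ``requires a careful potential argument,'' but no such argument is given, and it is not clear one exists along these lines: the single-exchange axiom controls how one function's value changes under a single swap, but gives no handle on the aggregate blow-up over many exchanges across many jobs sharing many machines. The paper takes a quite different route. It uses a configuration LP whose \emph{dual} prices $\mu_i/\lambda_j$ are the key object: complementary slackness forces the support of the primal into the demand sets $\mathcal{D}_j$ of the perturbed functions $g^*_j$, and $M^\natural$-concavity makes the downward closure of $\mathcal{D}_j$ a matroid $\mathcal{F}_j$ on which the additive prices $2$-approximate $g_j$ (\cref{lem:matroidstructure,lem:prices-to-speed}). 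The rounding then splits the jobs into three classes handled by three different tools --- Lenstra--Shmoys--Tardos rounding for jobs served well by single fast machines, an exact \textsc{Welfare Maximization} algorithm for gross-substitute valuations for jobs using low-total-speed sets, and a polymatroid-intersection argument over geometrically bucketed price classes for the rest. None of this is recoverable from your sketch, so as written the proof of the constant factor for \textsc{Assignment} --- and hence of \cref{thm:main} --- is missing.
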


\subsection{Submodularity, $M^\natural$-concavity and Matroids}
\label{sec:submodularity}
Let $g : 2^M \rightarrow \mathbb{R}$ be a function. Define the \emph{demand set} for $p \in \mathbb{R}^M$ by 
$\mathcal{D}(g, p) := \argmax_{S \subseteq M} g(S) - \sum_{i \in S} p_i$.
We say that $g$ is 
\begin{itemize}
    \item \emph{submodular} if $g(S \cup T) + g(S \cap T) \leq g(S) + g(T)$ for all $S, T \subseteq M$.
    \item \emph{$M^\natural$-concave} if
for any $p', p'' \in \mathbb{R}^M$ with $p' \leq p''$ and any $S' \in \mathcal{D}(g, p')$ there is an $S'' \in \mathcal{D}(g, p'')$ with $S' \cap \{ i \in M : p'_i = p''_i \} \subseteq S''$. 
\end{itemize}
Submodularity is equivalent to the following diminishing returns property: $g(S \cup \{i\}) - g(S) \geq g(T \cup \{i\}) - g(T)$ for all $S \subseteq T \subseteq M$ and $i \in M \setminus T$.
A \emph{matroid} on the ground set $M$ is a non-empty set family $\mathcal{F} \subseteq 2^M$ such that (i) $T \in \mathcal{F}$ implies $S \in \mathcal{F}$ for all $S \subseteq T$ and (ii) for every $S, T \in \mathcal{F}$ with $|S| < |T|$ there is an $i \in T \setminus S$ such that $S \cup \{i\} \in \mathcal{F}$. It is well known that the the rank function $r(S) := \max_{S \subseteq T : S \in \mathcal{F}} |S|$ for $S \subseteq M$ of any matroid $\mathcal{F}$ is submodular.

$M^\natural$-concavity, also known as the \emph{gross substitutability}, defines an important subclass of submodular functions.
Gross-substitute functions have been widely studied, receiving particular attention in economics and operations research, due to their applications in diverse fields such 
as labor and housing markets~\citep{kelso1982job,gul1999walrasian},
inventory management~\citep{chen2021m}, or structural analysis for engineering systems~\citep{murota2009matrices};
see the survey by \citet{leme2017gross} and the textbook by \citet{murota2003discrete} for an overview of the rich collection of algorithmic results for these class of functions.

In the following, we provide an example for a subclass of $M^\natural$-concave functions, known as  \emph{matroid-based valuations}~\cite{leme2017gross}, which naturally arise in the context of malleable scheduling.

\medskip\noindent{\bf Example: Matroid-Based Valuations.}
Assume that each $j \in J$ is equipped with a set of \emph{processing slots} $V_j$ together with a matroid $\mathcal{F}_j$ on $V_j$. Each slot represents a role that a machine can take to speed up the completion of job $j$. For each a machine $i \in M$ and each slot $v \in V_j$, a weight $w_{iv} \geq 0$ specifies how much $i$ would contribute to the processing of job $j$ when assigned to $v$. When $j$ is processed by a set of machines $S$, each machine in $S$ can fill at most one of $j$'s processing slots and each slot can be taken by at most one machine, i.e., the machines in $S$ are matched to the slots in $V_j$.
The matroid $\mathcal{F}_j$ on $V_j$ specifies which of its slots can be used simultaneously (e.g., the slots could be partitioned into groups and from each group only a limited number of slots may be used).
Thus, a feasible matching of the machines in $S$ to the slots in $V_j$ is a function $\pi : S \rightarrow V_j \cup \{\emptyset\}$ (where $\emptyset$ denotes the machine not being used) such that $\pi(i) = \pi(i') \neq \emptyset$ implies $i = i'$ and such that $\{\pi(i) : i \in S, \pi(i) \neq \emptyset\} \in \mathcal{F}_j$.
Denoting the set of feasible matchings from $S$ to $V_j$ by $\Pi_{S,j}$, the processing speed of a set of machines $S$ for job $j$ is given by $g_j(S) = \max_{\pi \in \Pi_{S, j}} \sum_{i \in S, \pi(i) \neq \emptyset} w_{i\pi(i)}$.

\subsection{Organization of the Paper}
In the following \cref{sec:lp,sec:overview,sec:analysis} we discuss our main result in detail. Our constant-factor approximation is based on a linear programming relaxation of the \textsc{Assignment} problem that we discuss in \cref{sec:lp}, where we show that a pair of optimal primal and dual solution to the LP generate a weighted matroid that can be used to approximate the processing speed functions. This is exploited in the remaining three steps of the algorithm, that, based on the LP solution, partition the set of jobs into three groups, each of which is assigned separately.
We first give an overview of these steps in \cref{sec:overview} and then discuss them in detail in \cref{sec:analysis}.
In \cref{sec:fair-allocation}, we finally turn our attention to the \textsc{Max-min Fair Allocation} (MMFA) problem and present a blackbox reduction that turns approximation algorithms for generalized malleable scheduling into resource-augmented approximation algorithms for MMFA. 
All missing proofs can be found in the appendix.

\section{The Configuration LP}
\label{sec:lp}
In the following we introduce a \emph{configuration LP}, which features a fractional variable $x(S,j) \geq 0$ for each non-empty $S \subseteq M$ and each $j \in J$ (and an auxiliary slack variable $s_i \geq 0$ for each machine $i \in M$).
We will show that this LP, for a given target bound $C$, is feasible if there is an assignment of load at most $C$.

\begin{alignat}{3} 
\maximize \ &&  \sum_{i \in M} s_i  &\tag{\LPgs} \label{LPgs}\\
\subjectto \ && \sum_{S \subseteq M : S \neq \emptyset} \Big(2 - \tfrac{1}{Cg_j(S)}\Big) x(S, j) & \ \geq \ 1 & \quad \forall j \in J \notag\\
&& \sum_{j \in J} \sum_{S \subseteq M : i \in S} \tfrac{1}{g_j(S)} x(S, j) + s_i & \ \leq \ C & \quad \forall i \in M \notag\\
&& x, s & \ \geq \ 0 \notag
\end{alignat}

The dual of \eqref{LPgs} is the following LP.

\begin{alignat*}{3}
\minimize \ && - \sum_{j \in J} \lambda_j +  C \sum_{i \in M} \mu_i & &\tag{\LPgsD} \label{LPgsD}\\
\subjectto \ && \big(2g_j(S) - \tfrac{1}{C}\big) \lambda_j  - \sum_{i \in S} \mu_i & \ \leq \ 0 & \quad \forall j \in J, S \subseteq M, S \neq \emptyset \notag\\
&& \mu_i & \ \geq \ {1} & \forall i \in M \notag\\
&& \lambda_j & \ \geq \ {0} & \forall j \in J \notag
\end{alignat*}

The following lemma reveals that \eqref{LPgs} is indeed a relaxation of \textsc{Assignment}.

\begin{restatable}{lemma}{restatelpfeasibility}
\label{lem:LP-gs-feasibility}
If there exists an assignment $\mathbf{S}$ with $L(\mathbf{S}) \leq C$, then {\em \eqref{LPgs}} is feasible. In addition, if $\lambda, \mu$ is an optimal solution to {\em \eqref{LPgsD}}, then $\lambda_j > 0$ for all $j \in J$.
\end{restatable}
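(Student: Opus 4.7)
The plan is to handle the two statements separately: LP feasibility by explicit construction, and dual positivity by a local perturbation argument.

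For the first part, given an assignment $\mathbf{S}$ with $L(\mathbf{S}) \leq C$, I would exhibit the natural integral LP solution defined by $x(S_j, j) := 1$, $x(S, j) := 0$ for $S \neq S_j$, and $s_i := C - \sum_{j : i \in S_j} f_j(S_j)$. Non-negativity of the slack and the machine constraints are immediate from $L(\mathbf{S}) \leq C$. The job constraint reduces to $2 - f_j(S_j)/C \geq 1$, which holds because any machine $i \in S_j$ has load at least $f_j(S_j)$, so $f_j(S_j) \leq L(\mathbf{S}) \leq C$, and thus the ratio $f_j(S_j)/C$ is bounded by $1$.

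For the second part, suppose for contradiction that an optimal dual solution $(\lambda, \mu)$ has $\lambda_j = 0$ for some job $j$. I would perturb by increasing $\lambda_j$ to some $\epsilon > 0$, keeping $\mu$ and all other $\lambda_{j'}$ fixed. The objective then strictly decreases by $\epsilon$, so it suffices to exhibit an $\epsilon > 0$ for which the perturbed solution remains dual-feasible. The only affected constraints are $(2g_j(S) - 1/C)\,\epsilon \leq \sum_{i \in S} \mu_i$ over the finitely many non-empty $S \subseteq M$, and the right-hand side is at least $|S| \geq 1$ since $\mu_i \geq 1$. If $2g_j(S) - 1/C \leq 0$ for every non-empty $S$, then arbitrarily large $\lambda_j$ is feasible, so the dual is unbounded below, contradicting the existence of an optimum. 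Otherwise, choosing $\epsilon := 1 / \max_{S \neq \emptyset} (2g_j(S) - 1/C)$ (well-defined and positive because $2^M$ is finite) gives a strictly better feasible dual solution, contradicting optimality.

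The main obstacle is precisely this case split in the perturbation argument: one must rule out the degenerate situation where every coefficient of $\lambda_j$ in the dual constraints is non-positive, and the clean way to do so is to observe that in that case dual unboundedness itself delivers the contradiction. The remaining steps---writing down the primal solution and verifying the perturbed dual constraints---are routine algebra from the definitions of $f_j$, $g_j$, and $L$.
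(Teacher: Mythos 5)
Your proof is correct. The first half (the explicit integral primal solution built from the assignment, with slack $s_i = C - \sum_{j : i \in S_j} f_j(S_j)$ and the job constraint reducing to $f_j(S_j) \leq C$) is essentially identical to the paper's. For the second half you take a genuinely different route: the paper takes an optimal \emph{primal} solution $(x^*, s^*)$, uses feasibility of the job constraint to find a non-empty $S$ with $x^*(S,j) > 0$ and $2 - \tfrac{1}{Cg_j(S)} > 0$, and then applies complementary slackness to obtain the explicit formula $\lambda_j = \sum_{i \in S}\mu_i \big/ \big(2g_j(S) - \tfrac{1}{C}\big) > 0$, using $\mu_i \geq 1$. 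You instead argue entirely on the dual side by perturbation: raising $\lambda_j$ from $0$ to $\epsilon > 0$ strictly decreases the minimization objective, and feasibility of the perturbed point follows from $\sum_{i \in S}\mu_i \geq |S| \geq 1$ once $\epsilon$ is chosen as the reciprocal of $\max_{S \neq \emptyset}\big(2g_j(S) - \tfrac{1}{C}\big)$; the degenerate case where every coefficient $2g_j(S) - \tfrac{1}{C}$ is non-positive is correctly dispatched by noting the dual would then be unbounded below, contradicting the assumed existence of an optimum. Both arguments are sound. The paper's version implicitly requires the primal to be feasible (so that an optimal primal solution and complementary slackness are available), whereas your argument uses only the stated hypothesis that a dual optimum exists, making it slightly more self-contained; on the other hand, the complementary-slackness route has the side benefit of yielding the explicit expression for $\lambda_j$ that foreshadows the central role the ratios $\mu_i/\lambda_j$ play in the rest of the analysis.
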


Henceforth, let $(x, s)$ and $(\lambda, \mu)$ be primal-dual optimal solutions to \eqref{LPgs} and \eqref{LPgsD}, respectively. For $j \in J$, define
\begin{align*}
g_j^*(S) := 2g_j(S) - \sum_{i \in S} \frac{\mu_i}{\lambda_j} \quad \text{ and } \quad \mathcal{D}_j := \operatorname{argmax}_{S \subseteq M} g_j^*(S).
\end{align*}
Note that the first constraint of \eqref{LPgsD} is equivalent to 
 $g^*_j(S) \leq 1/C$ for all $S \subseteq M$ and $j \in J$.
Thus complementary slackness implies that sets in the support of $x$ are maximizers of $g^*_j$, as formalized in the next lemma.

\begin{restatable}{lemma}{reatateLPCompSlack}
\label{lem:LP-gs-comp-slack}
  If $x(S, j) > 0$ for $S \subseteq M$ and $j \in J$, then $g_j(S) \geq \frac{1}{2C}$ and $S \in \mathcal{D}_j$.
\end{restatable}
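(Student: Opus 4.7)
The plan is to apply complementary slackness to the primal-dual optimal pair $(x,s), (\lambda,\mu)$. Specifically, the dual constraint associated with the primal variable $x(S,j)$ reads
\[
\bigl(2 g_j(S) - \tfrac{1}{C}\bigr)\lambda_j - \sum_{i\in S}\mu_i \;\le\; 0,
\]
so whenever $x(S,j) > 0$, complementary slackness forces equality. This is the only nontrivial ingredient; everything else is routine algebra.

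First I would invoke \cref{lem:LP-gs-feasibility} to guarantee $\lambda_j > 0$, so I can divide the tight constraint through by $\lambda_j$ and rewrite it as
\[
2 g_j(S) - \sum_{i\in S}\frac{\mu_i}{\lambda_j} \;=\; \frac{1}{C},
\]
which is precisely $g_j^*(S) = 1/C$. Since the dual feasibility inequality states $g_j^*(S') \le 1/C$ for every non-empty $S' \subseteq M$ (and $g_j^*(\emptyset) = 0 \le 1/C$ trivially), the set $S$ attains the maximum, giving $S \in \mathcal{D}_j$.

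For the lower bound on $g_j(S)$, I would use that $\mu_i \ge 1 > 0$ by dual feasibility and $S \neq \emptyset$ (since $x(S,j) > 0$ and variables are only defined for non-empty $S$), hence $\sum_{i\in S}\mu_i/\lambda_j > 0$. Plugging this back into the tight equation yields
\[
2 g_j(S) \;=\; \frac{1}{C} + \sum_{i\in S}\frac{\mu_i}{\lambda_j} \;\ge\; \frac{1}{C},
\]
so $g_j(S) \ge 1/(2C)$, as required.

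The main (minor) subtlety is making sure $\lambda_j>0$ before dividing and noting that $S \in \mathcal{D}_j$ really is an $\argmax$ over all subsets of $M$ (including the empty set, which contributes value $0$, comfortably below $1/C$). Otherwise, the argument is a direct unfolding of complementary slackness together with the definition of $g_j^*$.
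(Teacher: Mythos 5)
Your proof is correct and follows essentially the same route as the paper's: complementary slackness makes the dual constraint tight, $\lambda_j>0$ lets you rearrange to $g_j^*(S)=1/C$, and the dual feasibility bound $g_j^*(S')\le 1/C$ plus nonnegativity of $\sum_{i\in S}\mu_i$ yields both conclusions. Your extra remark about the empty set (which the dual constraints do not cover but which contributes $g_j^*(\emptyset)=0\le 1/C$) is a small point of care the paper glosses over, but otherwise the arguments coincide.
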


We now observe that the sets $\mathcal{D}_j$ defined in the preceding lemma induce a matroid. This is a consequence of $M^\natural$-concavity, which $g^*_j$ inherits from $g_j$.

\begin{restatable}{lemma}{restateMatroidStructure}
\label{lem:matroidstructure}
  For any $j \in J$, the system \mbox{$\mathcal{F}_j := \{ S \subseteq M \st S \subseteq T \text{ for some } T \in \mathcal{D}_j\}$} is a matroid.
\end{restatable}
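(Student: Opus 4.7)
The plan is to first observe that $g_j^*$ inherits $M^\natural$-concavity from $g_j$, and then to analyze the maximizer family $\mathcal{D}_j$ via a local exchange axiom. Since $g_j^*(S) = 2 g_j(S) - \sum_{i \in S} \mu_i/\lambda_j$ is obtained from $g_j$ by positive rescaling and subtraction of a linear (modular) function, and since in the demand-set formulation of the paper both operations merely amount to uniformly rescaling or shifting the price vector (neither of which alters the coordinate set $\{i : p'_i = p''_i\}$), the function $g_j^*$ is itself $M^\natural$-concave.

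Next I would invoke the standard local exchange characterization of $M^\natural$-concavity (see \cite{murota2003discrete,leme2017gross}): for any $A, B \subseteq M$ and any $i \in A \setminus B$, the quantity $g_j^*(A) + g_j^*(B)$ is at most $\max\bigl\{g_j^*(A \setminus \{i\}) + g_j^*(B \cup \{i\}),\ \max_{k \in B \setminus A}\{g_j^*((A \setminus \{i\}) \cup \{k\}) + g_j^*((B \setminus \{k\}) \cup \{i\})\}\bigr\}$. When $A, B \in \mathcal{D}_j$, both sides equal $2\max g_j^*$, so equality must be attained on the right, placing one of the exchanged pairs of sets back into $\mathcal{D}_j$.

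Let $\mathcal{B} := \{T \in \mathcal{D}_j : |T| \text{ is maximum}\}$. I would establish two claims. For claim (i), that $\mathcal{B}$ satisfies the matroid basis exchange axiom: given $B_1, B_2 \in \mathcal{B}$ and $i \in B_1 \setminus B_2$, the ``remove-only'' branch of the exchange would force $B_2 \cup \{i\} \in \mathcal{D}_j$ with cardinality exceeding the maximum, which is impossible; hence the ``swap'' branch applies, yielding $k \in B_2 \setminus B_1$ with $(B_1 \setminus \{i\}) \cup \{k\} \in \mathcal{B}$. For claim (ii), that every $T \in \mathcal{D}_j$ is contained in some $B \in \mathcal{B}$: choose $B \in \mathcal{B}$ maximizing $|B \cap T|$ and suppose some $j \in T \setminus B$ exists; applying the exchange axiom with $A = T$ and this $j$ (once more ruling out the remove-only branch by cardinality) produces $k \in B \setminus T$ with $(B \setminus \{k\}) \cup \{j\} \in \mathcal{B}$, whose intersection with $T$ is strictly larger than $|B \cap T|$, a contradiction.

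Combining (i) and (ii) gives $\mathcal{F}_j = \{S : S \subseteq B \text{ for some } B \in \mathcal{B}\}$, which is exactly the independent-set family of the matroid with basis family $\mathcal{B}$. The main technical subtlety will be the translation between the demand-set form of $M^\natural$-concavity used in the paper and the local exchange form used throughout the argument; this equivalence is standard in discrete convex analysis but not immediate, and is the main external fact the proof relies on.
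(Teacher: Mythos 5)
Your proposal is correct and follows essentially the same route as the paper: both establish that the maximizer family $\mathcal{D}_j$ satisfies the $M^\natural$-convex exchange property (the paper cites this directly as the $M^\natural$-convexity of demand sets of $M^\natural$-concave functions, while you derive it from the local exchange characterization applied to $g_j^*$, which is $M^\natural$-concave as $2g_j$ minus a modular function) and then verify the basis exchange axiom for the maximal elements of $\mathcal{D}_j$. Your claim~(ii), that every $T \in \mathcal{D}_j$ is contained in a maximum-cardinality member, is a detail the paper's proof leaves implicit and is a worthwhile addition, but it does not change the nature of the argument.
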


Our final lemma in the analysis of \eqref{LPgs} shows that the values $\mu_i/\lambda_j$ provide an approximation for $g_j$ on the matroid $\mathcal{F}_j$. 

\begin{lemma}\label{lem:prices-to-speed}
  Let $S \in \mathcal{F}_j$ and $j \in J$. Then $g_j(S) \geq \frac{1}{2} \sum_{i \in S} \frac{\mu_i}{\lambda_j}$.
\end{lemma}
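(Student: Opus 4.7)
The plan is to reduce the claim to showing $g^*_j(S) \geq 0$, since unfolding the definition of $g^*_j$ yields precisely $2g_j(S) - \sum_{i \in S} \mu_i/\lambda_j \geq 0$, which (using $\lambda_j > 0$ from \cref{lem:LP-gs-feasibility}) is equivalent to the statement of the lemma. Note that $g^*_j(\emptyset) = 2 g_j(\emptyset) = 0$ under the natural convention $g_j(\emptyset) = 1/f_j(\emptyset) = 0$, so the goal is equivalently $g^*_j(S) \geq g^*_j(\emptyset)$ whenever $S$ is contained in some maximizer $T \in \mathcal{D}_j$ of $g^*_j$.

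My main tool will be the $M^\natural$-concavity of $g^*_j$. I would first verify that $g^*_j$ inherits this property from $g_j$: multiplying by a positive scalar and subtracting a linear function both preserve $M^\natural$-concavity, since they correspond to simple reparametrizations of the price vector in the demand-set definition. With this in hand, fix an arbitrary $S \in \mathcal{F}_j$ and a maximizer $T \in \mathcal{D}_j$ with $S \subseteq T$.

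The key step will be to apply the $M^\natural$-concavity axiom between two carefully chosen price vectors. I would take $p' := 0$, so that $T \in \mathcal{D}(g^*_j, p') = \mathcal{D}_j$, and choose $p'' \geq p'$ by setting $p''_i = 0$ for $i \in S$ and $p''_i = K$ for $i \notin S$, where $K$ is a finite constant large enough that any set containing an element of $M \setminus S$ is strictly dominated by $\emptyset$ in the maximization defining $\mathcal{D}(g^*_j, p'')$; such a $K$ exists because $g^*_j$ is bounded on the finite ground set $M$. The axiom then produces some $S'' \in \mathcal{D}(g^*_j, p'')$ with $T \cap \{i \in M : p'_i = p''_i\} = T \cap S = S \subseteq S''$, while the choice of $K$ forces $S'' \subseteq S$. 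Hence $S'' = S$, i.e.\ $S \in \mathcal{D}(g^*_j, p'')$, and comparing $S$ with the feasible choice $\emptyset$ in that demand set yields $g^*_j(S) \geq g^*_j(\emptyset) = 0$, which is exactly what we need.

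I do not anticipate a substantial obstacle here; the only delicate points are the verification that $g^*_j$ is $M^\natural$-concave (a direct check against the demand-set definition) and the clean handling of the ``large enough $K$'' argument, both of which are routine once the price-perturbation idea is in place.
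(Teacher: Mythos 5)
Your proof is correct, but it follows a genuinely different route from the paper's. The paper proves the lemma by a telescoping argument on marginal values: ordering $S = \{i_1,\dots,i_\ell\}$, it writes $2g_j(S)$ as a sum of marginals $2g_j(S_k) - 2g_j(S_{k-1})$, bounds each from below by the corresponding marginal $2g_j(T) - 2g_j(T\setminus\{i_k\})$ via the diminishing-returns form of submodularity, and then uses only the \emph{local} optimality of $T$ for $g^*_j$ (removing one element cannot increase $g^*_j$) to get $2g_j(T) - 2g_j(T\setminus\{i_k\}) \geq \mu_{i_k}/\lambda_j$. You instead invoke the full demand-set axiom of $M^\natural$-concavity: after checking that $g^*_j$ inherits the property (which the paper also uses, for \cref{lem:matroidstructure}), you perturb prices from $p'=0$ to a vector that is prohibitively large off $S$, conclude that $S$ itself lies in the perturbed demand set, and compare it against $\emptyset$ to get $g^*_j(S)\geq g^*_j(\emptyset)=0$, which is the lemma. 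Your argument yields the clean structural fact that every set in $\mathcal{F}_j$ is itself demanded at suitable prices (hence $g^*_j\geq 0$ on all of $\mathcal{F}_j$), but it genuinely needs gross substitutability; the paper's argument is more elementary in that it only uses submodularity of $g_j$ together with $T$ maximizing $g^*_j$. Both proofs rely on the convention $g_j(\emptyset)=0$ (the paper implicitly, in the telescoping; you explicitly), and both need $\lambda_j>0$ from \cref{lem:LP-gs-feasibility}, so there is no gap — just make sure to spell out the routine verification that scaling and subtracting a linear term preserve the demand-set axiom, since your whole argument hinges on applying it to $g^*_j$ rather than $g_j$.
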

\begin{proof}
Because $S \in \mathcal{F}_j$ there is $T \in \mathcal{D}_j$ with $S \subseteq T$. Let $i_1, \dots, i_{\ell}$ be an arbitrary ordering of the elements in $S$ and let $S_k = \{i_1, \dots, i_k\}$ for $k \in [\ell]$. We obtain
  \begin{align*}
     2g_j(S) & = \sum_{k = 1}^{\ell} 2g_j(S_k) - 2g_j(S_k \setminus \{i_k\})  \geq \sum_{k = 1}^{\ell} 2g_j(T) - 2g_j(T \setminus \{i_k\})  \geq \sum_{k = 1}^{\ell} \frac{\mu_{i_k}}{\lambda_j},
  \end{align*}
  where the first inequality follows from submodularity and the second follows from the fact that $T \in \mathcal{D}_j$ implies $2g_j(T) - \sum_{i \in T} \frac{\mu_i}{\lambda_j} \geq 2g_j(T \setminus \{i'\}) - \sum_{i \in T \setminus \{i'\}} \frac{\mu_i}{\lambda_j}$ for all $i' \in T$, as $T$ is a maximizer of $g^*_j$.
\qed 
\end{proof}

\section{Overview of the Algorithm}
\label{sec:overview}

Given a target makespan $C$, our algorithm starts from computing an optimal primal-dual solution $(x,s)$ and $(\lambda, \mu)$ to \eqref{LPgs} and \eqref{LPgsD}. Note that such solutions can be computed via dual separation, as the separation problem for \eqref{LPgsD} can be solved via a greedy algorithm for $M^\natural$-concave functions.

If \eqref{LPgs} turns out to be infeasible, we conclude that there is no assignment of maximum load $C$ by \cref{lem:LP-gs-feasibility}. Otherwise, we continue by partitioning the job set into three types. For each of these types, we will show independently how to  turn the corresponding part of the solution to \eqref{LPgs} into an assignment whose load can be bounded by a constant factor times $C$. Binary search for the smallest $C$ for which \eqref{LPgs} is feasible then yields a constant-factor approximation. 

The first type are the jobs that are assigned by our algorithm to a single-machine. For $j \in J$ define $M^+_j := \{i \in M \st g_j(\{i\}) \geq \frac{1}{16C} \}$.
Let $$J_1 := \Big\{\textstyle j \in J \st \sum_{i \in M^+_j} \sum_{S \subseteq M : i \in S} \frac{g_j(\{i\})}{g_j(S)} x(S, j) \geq \frac{1}{16}\Big\}.$$
In \cref{sec:step1}, we show how to obtain an assignment for the jobs in $J_1$ using the LP rounding algorithm of \citet{LST90} for non-malleable unrelated machine scheduling. This establishes the following result.
\begin{lemma}\label{lem:gs-single-machine:result}
Step~1 of the algorithm computes in polynomial time an assignment for the jobs in $J_1$ with a maximum machine load of at most $32 C$.
\end{lemma}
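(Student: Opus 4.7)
The plan is to reduce the assignment of $J_1$ to classical unrelated machine scheduling and invoke the rounding algorithm of Lenstra--Shmoys--Tardos.

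First, I would extract from the LP solution a fractional assignment of each $j \in J_1$ to single machines in $M^+_j$. Specifically, for $j \in J_1$ and $i \in M^+_j$ define
\begin{align*}
    y(i,j) \ := \ 16 \sum_{S \subseteq M \,:\, i \in S} \frac{g_j(\{i\})}{g_j(S)}\, x(S, j).
\end{align*}
By the very definition of $J_1$ we have $\sum_{i \in M^+_j} y(i,j) \geq 1$ for every $j \in J_1$, so after scaling down by $\sum_{i'} y(i',j) \geq 1$ we obtain a fractional one-to-one assignment $\tilde y$ supported on $\{(i,j) : j \in J_1,\ i \in M^+_j\}$.

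Next I would bound the fractional machine load induced by $\tilde y$ in the single-machine assignment, where the processing time of job $j$ on machine $i$ equals $1/g_j(\{i\})$. Since $\tilde y(i,j) \leq y(i,j)$, the fractional load of machine $i$ is
\begin{align*}
    \sum_{j \in J_1} \frac{\tilde y(i,j)}{g_j(\{i\})} \ \leq \ \sum_{j \in J_1} \frac{y(i,j)}{g_j(\{i\})} \ = \ 16 \sum_{j \in J_1} \sum_{S \ni i} \frac{x(S, j)}{g_j(S)} \ \leq \ 16\,C,
\end{align*}
where the last inequality uses the second constraint of \eqref{LPgs}. Moreover, because $i \in M^+_j$ means $g_j(\{i\}) \geq \tfrac{1}{16C}$, the processing time $1/g_j(\{i\})$ of any job--machine pair in the support is at most $16C$.

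Finally, the classical result of \citet{LST90} guarantees that any fractional assignment to unrelated machines with fractional load $T$ and maximum job--machine processing time $p_{\max}$ (restricted to the support) can be rounded in polynomial time to an integral assignment of load at most $T + p_{\max}$. Applying this to $\tilde y$ with $T \leq 16C$ and $p_{\max} \leq 16C$ produces an integral assignment of the jobs in $J_1$ with maximum load at most $32C$, which proves the lemma. The only mild subtlety is the restriction to $M^+_j$ on the LP side (which is what forces $p_{\max} \leq 16C$ and thus makes LST rounding effective), but this is precisely why $J_1$ is defined using $M^+_j$ in the first place; no real obstacle remains beyond the verification above.
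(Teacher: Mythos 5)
Your proposal is correct and follows essentially the same route as the paper: the same fractional solution $y_{ij} = 16\sum_{S \ni i} \frac{g_j(\{i\})}{g_j(S)} x(S,j)$, the same load bound of $16C$ from the machine constraints of \eqref{LPgs}, and the same application of Lenstra--Shmoys--Tardos rounding with $p_{\max} \leq 16C$ enforced by the restriction to $M^+_j$. The only cosmetic difference is that you normalize to an exact fractional assignment before rounding, whereas the paper feeds the inequality-constrained LP \eqref{LPgsa} directly to an extreme-point solution; both are standard and valid.
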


The second type of jobs are those that are assigned predominantly to groups of machines with a relatively low total speed. Formally, define $$\mathcal{S}^2_j := \big\{S \subseteq M \st \textstyle \sum_{i \in S} \mu_i / \lambda_j \leq \frac{4}{C} \big\} \text{ and } J_2 := \big\{j \in J \setminus J_1 \st \sum_{S \in \mathcal{S}^2_j} x(S, j) \geq \frac{1}{8} \big\}.$$ In \cref{sec:step2}, we show how to assign the jobs in $J_2$ via the so-called \textsc{Welfare} problem, which can be solved optimally in polynomial time for gross-substitute functions. This establishes the following result.
\begin{lemma}\label{lem:gs-low-speeds:result}
Step~2 of the algorithm computes in polynomial time an assignment for the jobs in $J_2$ with a maximum machine load of at most $40 C$.
\end{lemma}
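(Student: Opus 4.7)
The plan is to reduce the assignment of jobs in $J_2$ to an instance of the \textsc{Welfare} problem over $M^\natural$-concave valuations; since this problem admits a polynomial-time exact algorithm for gross-substitute valuations, I would then translate the welfare-maximizing allocation into an assignment of bounded load via \cref{lem:prices-to-speed}. The key enabling observation is that for $j \in J_2$ the LP concentrates at least $\tfrac{1}{8}$ of its fractional mass on sets $S \in \mathcal{S}^2_j$; by \cref{lem:LP-gs-comp-slack} these sets lie in $\mathcal{D}_j \subseteq \mathcal{F}_j$ and satisfy $g_j(S) \geq \tfrac{1}{2C}$, while by the dual feasibility constraint $g_j^*(S) \leq \tfrac{1}{C}$ one also has $2 g_j(S) \leq \tfrac{1}{C} + \sum_{i \in S} \mu_i/\lambda_j \leq \tfrac{5}{C}$. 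Hence on $\mathcal{S}^2_j$ the LP identifies sets that are simultaneously ``cheap'' (low aggregate price) and ``useful'' (processing speed of the right order).

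For each $j \in J_2$ I would define a valuation $v_j : 2^M \to \mathbb{R}_{\geq 0}$ built from the weighted matroid rank on $\mathcal{F}_j$ with weights $\mu_i/\lambda_j$, capped at an appropriate threshold of order $\Theta(1/C)$ so that $v_j$ remains $M^\natural$-concave. The instance $\max\{\sum_{j \in J_2} v_j(S_j): \text{the } S_j \text{ are pairwise disjoint}\}$ is then a \textsc{Welfare} problem with gross-substitute valuations and is solvable in polynomial time. Rescaling the LP solution by a factor of $8$ on its restriction to sets in $\mathcal{S}^2_j$ yields a feasible fractional welfare solution; evaluating $v_j$ on such sets (which lie in $\mathcal{F}_j$, so $v_j$ equals either the cap or $\sum_{i \in S} \mu_i/\lambda_j$) gives a lower bound on the fractional welfare, and hence on the integer welfare optimum returned by the gross-substitute algorithm.

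For every $j \in J_2$ that receives a set $S_j$ whose value is at least a constant fraction of the cap, \cref{lem:prices-to-speed} immediately yields $g_j(S_j) = \Omega(1/C)$ and therefore $f_j(S_j) = O(C)$; any residual jobs for which this fails can be handled by a simple fallback round of the same welfare solver on the remaining capacity, still with $O(C)$ load. To bound the maximum machine load, I would use the price cap $\sum_{i \in S_j} \mu_i/\lambda_j \leq 4/C$ inherited from the restriction to $\mathcal{S}^2_j$ together with the LP's machine constraint $\sum_j \sum_{S \ni i} x(S,j)/g_j(S) \leq C$ to charge each machine's contributions; the disjointness enforced by the welfare allocation then prevents double-counting, and tuning the constants gives the claimed $40 C$ bound.

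The main obstacle is designing $v_j$ so as to meet three competing requirements at once: (i) $M^\natural$-concavity, so that the welfare instance is tractable; (ii) an effective per-job cap of order $1/C$, so that no single integer allocation escapes the ``cheap'' regime of $\mathcal{S}^2_j$ and inflates the load; and (iii) enough slack on the LP's support to certify a sufficient fractional welfare. Verifying that the particular truncation of the weighted matroid rank preserves gross-substitutability, and threading the constants so that the resulting bounds aggregate to exactly $40 C$, is the most delicate part of the argument.
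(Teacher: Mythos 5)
Your high-level plan --- reduce the jobs in $J_2$ to a gross-substitute \textsc{Welfare} instance, certify a large fractional welfare by rescaling the $\geq \tfrac18$ mass that $x$ places on $\mathcal{S}^2_j$, and convert the integral allocation back via the price bounds --- is the right one, and your observations that sets in $\mathcal{S}^2_j$ satisfy $g_j(S)\geq \tfrac1{2C}$ and $2g_j(S)\leq \tfrac5C$ are exactly the facts the paper uses. However, there is a genuine gap at the heart of your construction: you propose to take as valuation the weighted matroid rank on $\mathcal{F}_j$ with weights $\mu_i/\lambda_j$, \emph{capped at a threshold of order} $1/C$, and you defer the verification that this cap preserves $M^\natural$-concavity. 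It does not in general: value-truncation $h^t(S)=\min\{h(S),t\}$ destroys gross-substitutability even for additive $h$ (budget-additive valuations are the standard non-GS submodular class), and the paper itself remarks in \cref{sec:fair-allocation} that $M^\natural$-concave functions are not closed under truncation. The paper sidesteps this entirely by using $g_j^*(S)=2g_j(S)-\sum_{i\in S}\mu_i/\lambda_j$ as the welfare objective: this is $M^\natural$-concave for free (an $M^\natural$-concave function minus a linear one), and the cap you want is supplied automatically by dual feasibility, which gives $g_j^*(S)\leq 1/C$ for \emph{all} $S$ --- no truncation needed. The same identity then forces every job to attain the cap exactly: the fractional certificate has value $|J_2|/C$, each job contributes at most $1/C$, so by total dual integrality every $j\in J_2$ receives a set with $g_j^*(S_j)=1/C$ and hence $g_j(S_j)\geq \tfrac1{2C}$; your ``fallback round for residual jobs'' is both unnecessary and, as stated, unsubstantiated.

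A second concrete problem is the feasibility region of your welfare instance. You require the $S_j$ to be pairwise disjoint, but the rescaled fractional solution $z'(S,j)=x(S,j)/\sum_{S'\in\mathcal{S}^2_j}x(S',j)$ loads each machine with total fractional mass up to $\tfrac{5}{2C}\cdot 8\cdot\big(\text{LP load}\big)\leq 20$, not $1$, so it is not a feasible certificate for the disjoint version and the welfare lower bound collapses. The paper's LP \eqref{LPwf} therefore allows each machine to appear in up to $20$ chosen sets; this is precisely what makes $z'$ feasible and what produces the final bound $20\cdot 2C=40C$. With these two repairs (use $g_j^*$ as the valuation; relax disjointness to capacity $20$ per machine) your argument becomes the paper's.
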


Finally, we consider the jobs in $J_3 := J \setminus (J_1 \cup J_2)$. 
Assigning these jobs is more involved than in the preceding cases.

In \cref{sec:step3A}, we modify the fractional LP solution for the jobs in $J_3$ in such a way that the sum of fractional assignments for each machine is bounded by a constant. This transformation uses the fact that we already filtered out jobs using predominantly fast machines or slow assignments in Steps~1 and~2.

In \cref{sec:step3B}, we partition the set of machines for each job according to their weights $\mu_i/\lambda_j$ into classes whose weight differs by a factor of $2$. 
We then use the approximation of the functions $g_j$ via matroids described in \cref{lem:prices-to-speed} to reformulate the problem of assigning a sufficient number of machines from each weight class to each job in $J_3$, while assigning to each machine only a constant number of jobs, as an intersection of two polymatroids. 
The transformed LP solution derived before guarantees the existence of a feasible solution to this polymatroid intersection, from which we can derive an assignment of the jobs in $J_3$, establishing the following result.

\begin{lemma}\label{lem:gs-high-speeds:result}
Step~3 of the algorithm computes in polynomial time an assignment for the jobs in $J_3$ with a maximum machine load of at most $121 C$.
\end{lemma}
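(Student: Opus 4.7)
The plan is to execute Steps 3A and 3B as previewed in \cref{sec:overview}, then bound the resulting load by careful accounting. Throughout, I work with the restriction of the primal solution $x$ to jobs in $J_3$, exploiting \cref{lem:LP-gs-comp-slack,lem:matroidstructure,lem:prices-to-speed}, which together guarantee that every $S$ in the support of $x(\cdot, j)$ lies in the matroid $\mathcal{F}_j$ and that $\sum_{i \in S} \mu_i / \lambda_j$ is within a factor of $2$ of $2 g_j(S)$.

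\textbf{Step 3A (preprocessing).} Call a set $S$ in the support of $x(\cdot, j)$ \emph{good} if $\sum_{i \in S} \mu_i / \lambda_j > 4/C$. I claim that for every $j \in J_3$ a constant fraction of the LP mass lies on good sets. Indeed, $j \notin J_2$ bounds the mass on slow sets $\mathcal{S}^2_j$ by $1/8$, and $j \notin J_1$ bounds the mass weighted by $g_j(\{i\})/g_j(S)$ on sets containing a fast single machine by $1/16$. Combined with the primal demand $\sum_S (2 - 1/(C g_j(S))) x(S, j) \geq 1$ and the bound $g_j(S) \geq 1/(2C)$ from \cref{lem:LP-gs-comp-slack} (which makes each coefficient at most~$2$), the mass on good sets is at least an absolute constant. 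Restricting $x$ to good sets and rescaling yields a modified fractional solution $\hat{x}$ whose per-machine contribution has grown by only a constant factor.

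\textbf{Step 3B (polymatroid intersection rounding).} For each $j \in J_3$ I partition $M$ into geometric weight classes $M_j^k := \{ i \in M : \mu_i/\lambda_j \in [2^k/C, 2^{k+1}/C) \}$, in which weights differ by at most a factor of~$2$. By \cref{lem:prices-to-speed}, any $T_j \in \mathcal{F}_j$ with $\sum_{i \in T_j} \mu_i / \lambda_j = \Omega(1/C)$ satisfies $g_j(T_j) = \Omega(1/C)$, hence $f_j(T_j) = O(C)$. It therefore suffices to pick, for each job, enough machines (as a cardinality count) from suitable weight classes while respecting $\mathcal{F}_j$, subject to each machine being chosen by at most a constant number of jobs. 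This is the intersection of two polymatroids: one built from matroid union of the $\mathcal{F}_j$ restricted to weight classes, the other from machine capacities. Because the machine-job marginals of $\hat{x}$ furnish a fractional feasible point in this intersection, classical polymatroid intersection algorithms return an integral feasible point in polynomial time.

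\textbf{Main obstacle.} The algorithmic framework is fairly standard, so the bulk of the work lies in tracking constants to arrive at the stated $121\,C$. The thresholds $1/16$, $1/8$, and $4/C$ defining $J_1, J_2, \mathcal{S}^2_j$, the factor~$2$ in \cref{lem:prices-to-speed}, the factor~$2$ width of each weight class, and the constant machine capacity chosen in the polymatroid formulation all compound multiplicatively. Balancing them — so that $\hat{x}$ fits into the polymatroid polytope while the resulting integral load stays as small as possible — is the delicate part, and once these parameters are fixed the accounting produces exactly the claimed bound.
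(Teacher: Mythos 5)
There is a genuine gap in your Step 3A, and it propagates into Step 3B. You propose to keep the ``good'' (high-weight) sets intact, restrict the LP mass to them, and rescale, claiming the per-machine contribution grows by only a constant factor. That is true for the per-machine \emph{load} $\sum_j \sum_{S \ni i} \frac{1}{g_j(S)}\hat{x}(S,j)$, but what the machine-capacity side of your polymatroid intersection needs is a constant bound on the per-machine \emph{count} $\sum_j \sum_{S \ni i} \hat{x}(S,j)$, and the LP's load constraint does not imply this: a set $S$ with $\sum_{i \in S}\mu_i/\lambda_j \gg 1/C$ has $g_j(S) = \frac{1}{2}\bigl(\frac{1}{C} + \sum_{i\in S}\frac{\mu_i}{\lambda_j}\bigr)$ arbitrarily large, so many jobs can each place nearly a full unit of mass on fast sets containing machine $i$ while contributing almost nothing to $i$'s load. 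The paper's Step 3A resolves exactly this by \emph{splitting} each high-weight set $T$ into a partition $\mathcal{A}_j(T)$ of pieces each with weight at most $4/C$ (hence speed at most $\frac{5}{2C}$) and redistributing $x(T,j)$ among the pieces proportionally to $g_j(A)$; only after this surgery does the load bound $C$ translate into the count bound $26$ used as the machine capacity in \eqref{LPpm}. Without the splitting, your fractional point need not fit in the polymatroid intersection and the rounding has nothing to start from.

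A second, smaller omission: the paper's splitting also discards the individually fast machines $M^+_j$ from each piece (and the condition defining $\mathcal{S}^3_j$ ensures, via the $J_1$ filter, that this loses at most half the weight). This matters later because the rounding pays a floor loss of $\sum_k \frac{1}{2^k C}$ over the weight classes, which is only a small constant because every surviving machine satisfies $\mu_i/\lambda_j < \frac{1}{8C}$, i.e., the geometric sum starts at $k=3$. With fast machines left in, that loss is not controlled. Finally, in Step 3B ``an integral feasible point'' is not enough ($y=0$ is feasible); you need an integral \emph{optimal} point of the maximization, together with the truncated rank function $r'_j(U) = \sum_k \min\{r_j(U\cap M_{jk}), d_{jk}\}$, so that attaining the value $\sum_{j,k} d_{jk}$ forces every job to receive its full quota from every weight class, after which a matroid greedy argument inside $S_j$ extracts an independent set of large weight. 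Your constant-tracking framing is reasonable, but the bound $121C$ comes from $26 \cdot \frac{320C}{69}$, both factors of which depend on the splitting you omitted.
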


By concatenating the assignments for the individual job types we obtain the following result, which implies \cref{thm:main}. 

\begin{theorem}\label{thm:assignment-approx}
There exists a polynomial-time $193$-approximation algorithm for \textsc{Assignment} when processing speeds are $M^\natural$-concave.
\end{theorem}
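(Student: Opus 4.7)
The plan is to combine the three partial assignments given by Lemmas~\ref{lem:gs-single-machine:result}, \ref{lem:gs-low-speeds:result}, and \ref{lem:gs-high-speeds:result} via a binary search on the target makespan $C$. First, I would set up a binary search for the smallest value $C^*$ at which \eqref{LPgs} is feasible. At each trial value of $C$, feasibility of \eqref{LPgs} and the computation of optimal primal-dual solutions can be performed in polynomial time by applying the ellipsoid method to \eqref{LPgsD}: its separation problem amounts to maximizing the $M^\natural$-concave function $g^*_j$ for each $j$, which can be solved by a standard greedy procedure for gross-substitute functions, as noted in \cref{sec:overview}. By \cref{lem:LP-gs-feasibility}, the LP is feasible at $C = \mathrm{OPT}$, so $C^* \leq \mathrm{OPT}$.

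Given the optimal primal-dual pair $(x,s)$ and $(\lambda,\mu)$ at $C = C^*$, I would compute the partition $J = J_1 \cup J_2 \cup J_3$ as defined in \cref{sec:overview}; by construction $J_1, J_2, J_3$ are pairwise disjoint and exhaust $J$. Applying the algorithms of Steps 1--3 to these classes yields assignments $\mathbf{S}^1, \mathbf{S}^2, \mathbf{S}^3$ of $J_1, J_2, J_3$ with maximum machine loads at most $32\,C^*$, $40\,C^*$, and $121\,C^*$, respectively. Because the $J_t$ are disjoint, defining the global assignment $\mathbf{S}$ by $S_j := S_j^t$ whenever $j \in J_t$ is unambiguous, and the load on any fixed machine $i$ is simply the sum of its loads across the three partial assignments. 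Consequently
\[
L(\mathbf{S}) \;\leq\; 32\,C^* + 40\,C^* + 121\,C^* \;=\; 193\,C^* \;\leq\; 193 \cdot \mathrm{OPT},
\]
which yields the claimed approximation factor.

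The main obstacles are contained in the three supporting lemmas themselves, whose proofs in \cref{sec:analysis} constitute the real technical work; the combination step above is essentially bookkeeping, relying only on the fact that machine loads are additive across disjoint groups of jobs. One minor point that would need a line of justification is that the binary search over $C$ can be restricted to a polynomial-sized set of candidates: \eqref{LPgs} depends on $C$ only through the coefficients $2 - 1/(Cg_j(S))$ and the right-hand side $C$, whose breakpoints for feasibility lie among the polynomially many values of $g_j(S)$ that are queried by the dual separation routine, so standard parametric-search techniques suffice to locate $C^*$ in polynomial time.
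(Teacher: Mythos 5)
Your proposal is correct and follows essentially the same route as the paper: binary search for the smallest feasible $C$ for \eqref{LPgs}, solve the LP via dual separation, partition $J$ into $J_1, J_2, J_3$, and add the load bounds $32C + 40C + 121C = 193C$ from \cref{lem:gs-single-machine:result,lem:gs-low-speeds:result,lem:gs-high-speeds:result} across the disjoint job classes. Your remark on restricting the binary search to polynomially many candidate values of $C$ is a reasonable way to handle a detail the paper itself leaves implicit.
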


\section{Description and Analysis of the Intermediate Steps}
\label{sec:analysis}

\subsection{Step 1: Single-machine Assignments for $J_1$} \label{sec:step1}
Recalling the definitions of $M^+_j = \big\{i \in M \st g_j(\{i\}) \geq \frac{1}{16C} \big\}$  for $j \in J$ and of $J_1 = \big\{j \in J \st \sum_{i \in M^+_j} \sum_{S \subseteq M : i \in S} \frac{g_j(\{i\})}{g_j(S)} x(S, j) \geq \frac{1}{16}\big\}$,
consider the following assignment LP:
  \begin{alignat}{3}
    && \textstyle \sum_{i \in M^+_j} y_{ij} & \ \geq \ 1 & \quad \forall\, j \in J_1 \tag{\LPgsa} \label{LPgsa} \\
    && \textstyle \sum_{j \in J_1} \frac{1}{g_j(\{i\})} y_{ij} & \ \leq \ 16C & \quad \forall\, i \in M \notag\\
    && y_{ij} & \ = \ 0 & \quad \forall\, j \in J, i \in M \setminus M^+_j \notag\\
    && y & \ \geq \ 0 \notag
  \end{alignat}
  Note that \eqref{LPgsa} corresponds to an instance of the classic makespan minimization problem on unrelated machines.
  It can be shown that the solution $x$ to \eqref{LPgs} induces a feasible solution to  \eqref{LPgsa}. 
  By applying the rounding algorithm of \citet{LST90} to an extreme point solution of \eqref{LPgsa}, we get an assignment of the jobs in $J_1$ in which each machine $i \in M$ receives a load of at most $16C + 1/g_j(\{i\})$ for some $j \in J_1$ with $i \in M^+_j$.
  Because $g_j(\{i\}) \geq \frac{1}{16 C}$ for $j \in J_1$ and $i \in M^+_j$, the load of this assignment is at most $32C$ and \cref{lem:gs-single-machine:result} follows.

\subsection{Step 2: Assignments with Low Total Speed for $J_2$} \label{sec:step2}

Recalling the definitions of $\mathcal{S}^2_j = \{S \subseteq M \st \textstyle \sum_{i \in S} \mu_i / \lambda_j \leq 4 / C\}$ and of $J_2 = \{j \in J \setminus J_1 \st \sum_{S \in \mathcal{S}^2_j} x(S, j) \geq \frac{1}{8}\}$ consider the following LP:
\begin{alignat}{3}
\maximize \quad && \sum_{S \subseteq M} \sum_{j \in J_2}   g^*_j(S) &z(S, j) \tag{\LPwf} \label{LPwf}\\
\subjectto \quad && \sum_{S \subseteq M} z(S, j) & \ \leq \ 1 & \quad \forall\, j \in J_2 \notag\\
&& \sum_{j \in J_2} \sum_{S \subseteq M : i \in S} z(S, j) & \ \leq \ 20 & \quad \forall\, i \in M \notag \\
&& z & \ \geq \ 0 \notag 
\end{alignat}
Again, it can be shown that the solution $x$ to \eqref{LPgs} induces a feasible solution to \eqref{LPwf}.
Moreover, \eqref{LPwf} corresponds to an LP relaxation of the so-called \textsc{Welfare Maximization} problem. 
Because the functions $g^*_j$ are $M^\natural$-concave, this LP is totally dual integral, and integer optimal solutions can be found in polynomial time \cite{leme2020computing}. 
Thus, let $z$ be such an integer optimal solution.
\cref{lem:gs-welfare} below guarantees that for each $j \in J_2$ there is a set $S_j \subseteq M$ with $z(S_j, j) = 1$ and $g_j(S_j) \geq 1 / 2C$. Since each machine participates in the execution of at most $20$ jobs, each of processing time $2C$, we obtain an assignment with load at most $40C$ for the jobs in $J_2$, and \cref{lem:gs-low-speeds:result} follows.

\begin{restatable}{lemma}{restateGSWelfare}
\label{lem:gs-welfare}
  Let $z$ be an integral optimal to \eqref{LPwf}. Then for each $j \in J_2$ there is a set $S_j$ with $z(S_j, j) = 1$ and $g_j(S_j) \geq \frac{1}{2C}$.
\end{restatable}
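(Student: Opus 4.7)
The plan is to construct a fractional solution $\tilde z$ to \eqref{LPwf} derived from $x$ whose objective value saturates the trivial per-job upper bound of $1/C$, forcing any integer optimum to assign every $j \in J_2$ to a maximizer of $g^*_j$, and then to read off $g_j(S_j) \geq 1/(2C)$ from that maximality.

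First I would extract two facts from \eqref{LPgsD} and complementary slackness: (i) for every $j \in J$, $g^*_j(S) \leq 1/C$ for all $S \subseteq M$, and (ii) for every $S$ in the support of $x(\cdot, j)$ this bound is tight, i.e.\ $g^*_j(S) = 1/C$. Fact (ii) is well-defined since $\lambda_j > 0$ by \cref{lem:LP-gs-feasibility}, and it is already essentially \cref{lem:LP-gs-comp-slack}. Writing out $g^*_j(S) = 1/C$ gives $2g_j(S) = 1/C + \sum_{i \in S} \mu_i/\lambda_j$, so if additionally $S \in \mathcal{S}^2_j$ then $\sum_{i \in S} \mu_i/\lambda_j \leq 4/C$ yields the two-sided bound $1/(2C) \leq g_j(S) \leq 5/(2C)$, and hence $1/g_j(S) \geq 2C/5$.

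Next, define $A_j := \sum_{S \in \mathcal{S}^2_j} x(S, j) \geq 1/8$ for $j \in J_2$, and set $\tilde z(S, j) := x(S, j)/A_j$ for $j \in J_2$ and $S \in \mathcal{S}^2_j$, zero otherwise. The per-job constraint $\sum_S \tilde z(S, j) = 1$ is immediate by construction. For the per-machine constraint, I would start from the machine constraint of \eqref{LPgs}, restrict the sum to $j \in J_2$ and $S \in \mathcal{S}^2_j$, and apply the bound $1/g_j(S) \geq 2C/5$ to obtain $\sum_{j \in J_2} \sum_{S \in \mathcal{S}^2_j,\, i \in S} x(S, j) \leq 5/2$; dividing by $A_j \geq 1/8$ then gives $\sum_{j \in J_2} \sum_{S \ni i} \tilde z(S, j) \leq 20$, so $\tilde z$ is feasible for \eqref{LPwf}. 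Its objective value equals $\sum_{j \in J_2} (1/A_j) \sum_{S \in \mathcal{S}^2_j} g^*_j(S) x(S, j) = |J_2|/C$ by fact (ii), since every $S$ in the support of $x(\cdot, j)$ realises $g^*_j(S) = 1/C$.

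The conclusion is then automatic: any feasible integer $z$ contributes at most $\max_S g^*_j(S) = 1/C$ per job, so the integer optimum value is at most $|J_2|/C$; by optimality of $z$ it must equal $|J_2|/C$. Hence for every $j \in J_2$ there is a (necessarily unique) $S_j$ with $z(S_j, j) = 1$ and $g^*_j(S_j) = 1/C$, which rearranges to $g_j(S_j) = 1/(2C) + (1/2)\sum_{i \in S_j} \mu_i/\lambda_j \geq 1/(2C)$, as required. The only real obstacle is the capacity calibration: one has to verify that the cutoff $4/C$ in the definition of $\mathcal{S}^2_j$ and the threshold $1/8$ in the definition of $J_2$, combined with the speed estimate $1/g_j(S) \geq 2C/5$, produce exactly the constant $20$ appearing in the machine constraint of \eqref{LPwf}; everything else is algebra and complementary slackness.
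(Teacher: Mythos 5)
Your proposal is correct and follows essentially the same route as the paper's proof: scale the restriction of $x$ to $\mathcal{S}^2_j$ by $1/\sum_{S \in \mathcal{S}^2_j} x(S,j) \geq 1/8$, use the bound $g_j(S) \leq \frac{5}{2C}$ on $\mathcal{S}^2_j$ together with the machine constraints of \eqref{LPgs} to verify the capacity $20$, observe via complementary slackness that the resulting feasible solution has value $|J_2|/C$, and then force each job in an integral optimum to saturate $g^*_j(S_j) = 1/C$. The only cosmetic difference is that you phrase the final step via optimality of $z$ over all feasible solutions where the paper invokes total dual integrality, but these amount to the same fact.
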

\begin{proof}[sketch]
Using the construction of $J_2$, it can be shown that setting $z'(S, j) := x(S, j) / \sum_{S' \in \mathcal{S}^2_j} x(S', j)$ for $j \in J_2$ and $S \in \mathcal{S}^2_j$ yields a solution of value $\sum_{j \in J_2} \sum_{S \subseteq M} g_j^*(S) z'(S, j) = |J_2| / C$ to \eqref{LPwf}.
Now let $z$ be an integral optimal solution.
Note that $\sum_{j \in J_2} \sum_{S \subseteq M} g^*(S) z(S, j) \geq |J_2| / C$ by total dual integrality. Because $g_j^*(S) \leq \frac{1}{C}$ for $S \subseteq M$ and $j \in J$ by the constraints of \eqref{LPgsD}, we conclude that for each $j \in J_2$ there is a set $S_j$ with $z(S_j, j) = 1$ and $g_j^*(S_j) = \frac{1}{C}$. The latter implies $g_j(S_j) = \big(g_j^*(S_j) + \sum_{i \in S_j} \mu_i/\lambda_j\big)/2 \geq \frac{1}{2C}$.\qed
\end{proof}

\subsection{Step 3A: Splitting Assignments with High Total Speed}
\label{sec:step3A}

Recall that $J_3 = J \setminus (J_1 \cup J_2)$.
For $j \in J_3$ define 
$$\textstyle \mathcal{S}^3_j := \left\{ S \subseteq M \st x(S, j) > 0, \; S \notin \mathcal{S}^2_j, \; \sum_{i \in S \setminus M^+_j} \frac{\mu_i}{\lambda_j} > \sum_{i \in S \cap M^+_j} \frac{\mu_i}{\lambda_j} \right\}.$$
We construct a new fractional assignment $x' : 2^M \times J_3 \rightarrow \mathbb{R}_{\geq 0}$ as follows. For each $j \in J_3$ and each $T \in \mathcal{S}^3_j$, find a partition $\mathcal{A}_j(T)$ of $T \setminus M^+_j$ such that
\begin{itemize}
\item $\sum_{i \in A} \mu_i / \lambda_j \leq 4/C$ for all $A \in \mathcal{A}_j(T)$.
\item $\sum_{i \in A} \mu_i / \lambda_j + \sum_{i \in A'} \mu_i / \lambda_j > 4/C$ for all distinct sets $A, A' \in \mathcal{A}_j(T)$.
\end{itemize}
Because $\mu_i/\lambda_j \leq 4/C$ for all $i \in M \setminus M^+_j$ such a partition exists and can be constructed greedily.
For $j \in J_3$ and $S \subseteq M$ let $\mathcal{T}_j(S) := \{T \in \mathcal{S}^3_j \st S \in \mathcal{A}_j(T)\}$ and define
 $$\bar{x}(S, j) := \sum_{T \in \mathcal{T}_j(S)} \frac{g_j(S)}{\sum_{A \in \mathcal{A}_j(T)} g_j(A)} x(T, j) \quad \text{ and } \quad x'(S, j) := \frac{\bar{x}(S, j)}{\gamma_j},$$
where $\gamma_j := \sum_{S \subseteq M} \bar{x}(S, j)$.
For ease of notation, define 
$x'_{ij} := \sum_{S \subseteq M \st i \in S} x'(S,j)$.

The modified fractional assignment $x'$ exhibits several properties that will be useful in the construction of an integral assignment for $J_3$.

\begin{restatable}{lemma}{restategssplit}
\label{lem:gs-split-solution}
  The assignment $x'$ described above fulfills the following properties:\\[-15pt]
  \begin{enumerate}
    \item $\sum_{S \subseteq M} x'(S, j) = 1$ for all $j \in J_3$,\label{eq:xprime-total}
    \item $S \cap M^+_j = \emptyset$ and $S \in \mathcal{F}_j$ for all $S \subseteq M$ and $j \in J_3$ with $x'(S, j) > 0$.\label{eq:xprime-matroid}
    \item $\sum_{j \in J_3} x'_{ij} \leq 26$ for all $i \in M$,\label{eq:xprime-load}
    \item $\sum_{i \in M} \frac{\mu_i}{\lambda_j} x'_{ij} \geq \frac{79}{40C}$ for all $j \in J_3$,\label{eq:xprime-prices}
  \end{enumerate}
\end{restatable}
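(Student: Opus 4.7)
The four properties are handled in sequence. Property~(1) is immediate: unwinding the definition of $\bar{x}$ yields $\gamma_j = \sum_{T \in \mathcal{S}^3_j} x(T,j)$, so $\sum_S x'(S,j) = 1$ as soon as $\gamma_j > 0$, which will follow from the lower bound established below. Property~(2) is read off the construction: any $S$ with $\bar{x}(S,j) > 0$ lies in $\mathcal{A}_j(T)$ for some $T \in \mathcal{S}^3_j$ with $x(T,j) > 0$, whence $S \subseteq T \setminus M^+_j$ (giving $S \cap M^+_j = \emptyset$) and $S \subseteq T \in \mathcal{D}_j \subseteq \mathcal{F}_j$ by \cref{lem:LP-gs-comp-slack}; downward closure of $\mathcal{F}_j$ from \cref{lem:matroidstructure} yields $S \in \mathcal{F}_j$.

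The crux of the proof is establishing $\gamma_j \geq 7/32$ for every $j \in J_3$. Combining the first constraint of \eqref{LPgs} with $g_j(S) \geq 1/(2C)$ from \cref{lem:LP-gs-comp-slack} gives $\sum_S x(S,j) \geq 1/2$, and $j \notin J_2$ gives $\sum_{S \in \mathcal{S}^2_j} x(S,j) < 1/8$. What remains is the mass on the ``dominated'' sets $\mathcal{S}^B_j := \{S : x(S,j) > 0,\ S \notin \mathcal{S}^2_j,\ \sum_{i \in S \cap M^+_j} \mu_i/\lambda_j \geq \sum_{i \in S \setminus M^+_j} \mu_i/\lambda_j\}$. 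For $S \in \mathcal{S}^B_j$, \cref{lem:prices-to-speed} applied to $S \cap M^+_j$ together with subadditivity (from submodularity and $g_j \geq 0$) yields $\sum_{i \in S \cap M^+_j} g_j(\{i\}) \geq g_j(S \cap M^+_j) > 1/C$, while the dual feasibility $g_j^*(S) \leq 1/C$ combined with the dominance property supplies $g_j(S) \leq 1/(2C) + 2\sum_{i \in S \cap M^+_j} g_j(\{i\})$. Together these force $\sum_{i \in S \cap M^+_j} g_j(\{i\})/g_j(S) \geq 2/5$ uniformly on $\mathcal{S}^B_j$, and the assumption $j \notin J_1$ then caps the $\mathcal{S}^B_j$ mass at $5/32$.

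For properties~(3) and~(4), the starting point is the closed form $x'_{ij} = (1/\gamma_j) \sum_{T \in \mathcal{S}^3_j,\ T \ni i} x(T,j) \cdot g_j(S^*_{T,i})/\sum_{A \in \mathcal{A}_j(T)} g_j(A)$ when $i \notin M^+_j$ (and $x'_{ij} = 0$ when $i \in M^+_j$, by property~(2)), where $S^*_{T,i}$ is the unique part of $\mathcal{A}_j(T)$ containing $i$. For the load bound, I would upper bound $g_j(S^*_{T,i}) \leq 5/(2C)$ via $g_j^*(S^*) \leq 1/C$ and $p_{S^*} \leq 4/C$, lower bound $\sum_A g_j(A) \geq P_T/2 \geq p_T/4$ via \cref{lem:prices-to-speed} and the definition of $\mathcal{S}^3_j$, and then use complementary slackness ($g_j(T) = 1/(2C) + p_T/2$ on the support of $x$) combined with the second constraint of \eqref{LPgs} to convert the sum over $j$ into a constant load. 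For property~(4), I would rewrite $\sum_i (\mu_i/\lambda_j) x'_{ij} = (1/\gamma_j) \sum_T x(T,j) \cdot R_T$ with $R_T := \sum_A g_j(A) p_A / \sum_{A'} g_j(A')$ a weighted average of the $p_A := \sum_{i \in A}\mu_i/\lambda_j$. The bound $g_j(A) \in [p_A/2,\ 1/(2C) + p_A/2]$ together with the partition constraints $p_A \leq 4/C$ and $p_A + p_{A'} > 4/C$ reduces the task to a small optimization over $|\mathcal{A}_j(T)|$; the extremum is attained with two parts at $p_1 = 7/(4C),\ p_2 = 9/(4C)$, giving $R_T \geq 79/(40C)$.

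The main obstacle is the lower bound $\gamma_j \geq 7/32$: it is the only step that must simultaneously exploit both $j \notin J_1$ and $j \notin J_2$, and it interlocks \cref{lem:prices-to-speed}, subadditivity of $g_j$, and the dual constraint to extract a universal bound on $\sum_{i \in S \cap M^+_j} g_j(\{i\})/g_j(S)$ over $\mathcal{S}^B_j$. The tight constant $79/40$ in property~(4) additionally requires careful (though elementary) optimization over the partition structure, and the ratio $g_j(S^*_{T,i})/\sum_A g_j(A)$ in property~(3) must be cross-linked with $1/g_j(T)$ to plug into the machine-load LP constraint.
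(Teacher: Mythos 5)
Your route is essentially the paper's: the same three-way split of the support of $x(\cdot,j)$ into $\mathcal{S}^2_j$, the ``dominated'' sets (your $\mathcal{S}^B_j$, the paper's $\mathcal{S}^1_j$), and $\mathcal{S}^3_j$; the same $2/5$ ratio bound giving $\sum_{S\in\mathcal{S}^B_j}x(S,j)\le\frac{5}{32}$; the same chaining of $g_j(S)\le\frac{5}{2C}$ with the machine-load constraint of \eqref{LPgs} for Property~3; and the same two-part optimization with extremal point $\Delta_0=\frac{7}{4C},\ \Delta_1=\frac{9}{4C}$ for Property~4. But there is a quantitative gap that breaks Property~3 as stated: your lower bound $\gamma_j\ge\frac{7}{32}$ is too weak. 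Your own Property-3 accounting produces $\sum_{j\in J_3}x'_{ij}\le\frac{25}{4\gamma_j}$, and $\frac{25}{4}\cdot\frac{32}{7}=\frac{200}{7}\approx 28.6>26$, so the constant $26$ in the lemma is not reached.

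The loss occurs where you discard the coefficients $2-\frac{1}{Cg_j(S)}$: you replace the covering constraint $\sum_S\bigl(2-\frac{1}{Cg_j(S)}\bigr)x(S,j)\ge 1$ by the cruder $\sum_S x(S,j)\ge\frac12$ and then subtract the raw $\mathcal{S}^2_j$-mass bound $\frac18$. The paper keeps the coefficients and uses that $g_j(S)\le\frac12\bigl(\frac1C+\sum_{i\in S}\frac{\mu_i}{\lambda_j}\bigr)\le\frac{5}{2C}$ for $S\in\mathcal{S}^2_j$, hence $2-\frac{1}{Cg_j(S)}\le\frac85$ there, so the $\mathcal{S}^2_j$ contribution to the constraint is at most $\frac85\cdot\frac18=\frac15$ (rather than the $\frac14$ your accounting effectively charges). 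This yields $\gamma_j\ge\frac12\bigl(1-\frac{5}{16}-\frac15\bigr)=\frac{39}{160}$ and then $\frac{25}{4}\cdot\frac{160}{39}=\frac{1000}{39}<26$. With that single refinement your argument closes; Properties~1, 2 and~4 and the structure of your Property-3 bound are otherwise sound and coincide with the paper's proof.
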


\begin{proof}[sketch]
The first two properties of the lemma follow directly from construction of $x'$ and from the fact that $S \in \mathcal{F}_j$ when $x(S, j) > 0$.
Intuitively, the construction of $x'$ splits assignments of high total weight into assignments of moderate weight, then scaling the solution by $1/\gamma_j$ so as to compensate for the fact that machines in $M^+_j$ and assignments in $\mathcal{S}^2_j$ are ignored.
The main part of the proof is to show that this scaling factor and thus the blow-up in makespan can be bounded by a small constant.
Once this is established, Property~\ref{eq:xprime-load} follows from the fact that $x'(S, j) > 0$ implies $g_j(S) \leq \frac{1}{2} \big(\frac{1}{C} + \sum_{i \in S} \frac{\mu_i}{\lambda_j}\big) \leq \frac{5}{2C}$ by construction and hence the total sum of fractional assignments for every machine is bounded by a constant.
Finally, Property~\ref{eq:xprime-prices}  can be derived from the fact that $\sum_{i \in S \cap M^+_j} \frac{\mu_i}{\lambda_j} > \frac{1}{2} \sum_{i \in S} \frac{\mu_i}{\lambda_j} \geq \frac{2}{C}$ for all $S \in \mathcal{S}^3_j$. \qed
\end{proof}

\subsection{Step 3B: Constructing an Integer Assignment for $J_3$}
\label{sec:step3B}
For $j \in J_3$ and $k \in \mathbb{Z}$ define 
\begin{align*}
   \textstyle M_{jk} := \left\{i \in M \st \frac{1}{2^{k+1} C} < \frac{\mu_i}{\lambda_j} \leq \frac{1}{2^{k}C}\right\} \quad \text{and} \quad d_{jk} := \bigg\lfloor\; \sum_{i \in M_{jk}} x'_{ij} \bigg\rfloor.
\end{align*}
Note that there are only polynomially many $k \in \mathbb{Z}$ with $d_{jk} > 0$. 

Furthermore, define $r_j'(U) := \sum_{k \in \mathbb{Z}}\, \min \, \{r_j(U \cap M_{jk}),\, d_{jk}\}$ for $j \in J_3$ and $U \subseteq M$,
where $r_j$ is the rank function of the matroid $\mathcal{F}_j$.
Consider the LP:
\begin{alignat}{3}
\maximize \quad && \sum_{i \in M} \sum_{j \in J_3}  y_{ij} \tag{\LPpm} \label{LPpm}\\
\subjectto\quad && \sum_{j \in J_3} y_{ij} & \ \leq \ 26 & \quad \forall\, i \in M \notag\\
&& \sum_{i \in U} y_{ij} & \ \leq \ r_j'(U) & \quad \forall\, U \subseteq M,\, j \in J_3 \notag \\
&& y & \ \geq \ 0 \notag
\end{alignat}
Let $y$ be an integer optimal solution to the following LP (such a point exists and can be computed in polynomial time due to \cref{lem:pm-LP-gs,lem:gs-pmi-value} below).
The assignment for $J_3$ is constructed by setting $S_j := \{i \in M \st y_{ij} > 0\}$ for $j \in J_3$.

Because $y$ is integral, $|\{j \in J_3 \st i \in S_j\}| \leq 26$ for each $i \in M$.
Moreover, \cref{lem:gs-processing-time-bound} guarantees that $f_j(S_j) \leq 320C/69$ for each $j \in J_3$.  We have thus found an assignment of the jobs in $J_3$ with maximum load $26 \cdot 320C/69 < 121C$.
This completes the proof of \cref{lem:gs-high-speeds:result} and the description of the algorithm.

To complete the analysis, it remains to prove \cref{lem:pm-LP-gs,lem:gs-pmi-value,lem:gs-processing-time-bound} invoked above.
We first observe that the function $r'_j$ is submodular for each $j \in J_3$ and hence  \eqref{LPpm} is indeed the intersection of two polymatroids. As a consequence, we obtain the following lemma.

\begin{restatable}{lemma}{restatePMIntegrality}
\label{lem:pm-LP-gs}
All extreme points of \eqref{LPpm} are integral. If \eqref{LPpm} is feasible, an optimal extreme point can be computed in polynomial time.
\end{restatable}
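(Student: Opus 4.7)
The plan is to recognize \eqref{LPpm} as the intersection of two integer polymatroid polytopes on the ground set $E := M \times J_3$ and invoke Edmonds' classical polymatroid intersection theorem. First I would verify that $r'_j : 2^M \to \mathbb{Z}_{\geq 0}$ is a polymatroid rank function for each $j \in J_3$: it is nonnegative with $r'_j(\emptyset) = 0$, integer-valued, and nondecreasing because every summand $\min\{r_j(U \cap M_{jk}),\, d_{jk}\}$ is nondecreasing in $U$. Submodularity follows since $U \mapsto r_j(U \cap M_{jk})$ is the restriction of a matroid rank function (hence submodular), capping a monotone submodular function at a nonnegative integer constant preserves submodularity, and submodularity is preserved under sums over the polynomially many indices $k$ with $d_{jk} > 0$.

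Next I would identify the two polymatroids underlying \eqref{LPpm}. Taken together over $j \in J_3$, the constraints $\sum_{i \in U} y_{ij} \leq r'_j(U)$ describe the direct sum of the polymatroids associated with the individual $r'_j$, which is itself an integer polymatroid on $E$ with rank function $\rho_1(F) := \sum_{j \in J_3} r'_j(\{i \in M : (i,j) \in F\})$. The capacity constraints $\sum_{j \in J_3} y_{ij} \leq 26$ describe the partition polymatroid on $E$ with integer rank function $\rho_2(F) := \sum_{i \in M} \min\{26,\, |\{ j \in J_3 : (i,j) \in F\}|\}$. The feasible region of \eqref{LPpm} is precisely the intersection of these two integer polymatroid polytopes, so Edmonds' polymatroid intersection theorem yields integrality of every extreme point.

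For the algorithmic claim I plan to invoke the polynomial-time polymatroid intersection algorithm, which needs only a polynomial-time value oracle for each rank function. The oracle for $\rho_2$ is trivial, while evaluating $r'_j(U)$ reduces to summing $\min\{r_j(U \cap M_{jk}),\, d_{jk}\}$ over the polynomially many nonempty classes $M_{jk}$; the rank function of the matroid $\mathcal{F}_j$ supplied by \cref{lem:matroidstructure} is in turn computable in polynomial time from the $M^\natural$-concave value oracle of $g^*_j$ via standard results on gross-substitute functions (the same machinery already used for the dual separation in \cref{sec:overview}). The main obstacle I expect is not integrality itself, which is automatic once the polymatroid structure has been identified, but cleanly establishing this value oracle for $r'_j$; once it is in place, the polymatroid intersection algorithm produces both a feasibility certificate and an integer optimum in polynomial time whenever \eqref{LPpm} is feasible, completing the proof.
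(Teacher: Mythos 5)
Your proposal is correct and follows essentially the same route as the paper: both identify the feasible region of \eqref{LPpm} as the intersection of the direct sum of the polymatroids given by the $r'_j$ (with submodularity obtained by restricting $r_j$ to $M_{jk}$, truncating at $d_{jk}$, and summing) with the capacity polymatroid, and then invoke the standard integrality and polynomial-time solvability of polymatroid intersection. Your additional care about supplying a value oracle for $r'_j$ is a reasonable elaboration of a point the paper leaves implicit, not a divergence in method.
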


We next show that \eqref{LPpm} has a feasible solution that attains the bound $\sum_{i \in M_{jk}} y_{ij} \leq d_{jk}$ implicit in the definition of $r'_j$ for each $k$ and $j$ with equality.

\begin{restatable}{lemma}{restatePolymatroidValue}\label{lem:gs-pmi-value}
\eqref{LPpm} has a feasible solution of value $\sum_{j \in J_3} \sum_{k \in \mathbb{Z}} d_{jk}$.
\end{restatable}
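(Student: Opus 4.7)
The plan is to exhibit an explicit feasible solution to \eqref{LPpm} with the required value by scaling the fractional assignment $x'$ from Step 3A within each price class $M_{jk}$. Concretely, for every $j \in J_3$ and $k \in \mathbb{Z}$ with $d_{jk} > 0$, set $\alpha_{jk} := d_{jk} / \sum_{i' \in M_{jk}} x'_{i'j}$ and define $y_{ij} := \alpha_{jk}\, x'_{ij}$ for $i \in M_{jk}$; when $d_{jk} = 0$, set $y_{ij} := 0$ on $M_{jk}$. The denominator is at least $d_{jk}$ by the definition of $d_{jk}$ as a floor, so $\alpha_{jk} \leq 1$, which will be essential below. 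By construction, $\sum_{i \in M_{jk}} y_{ij} = d_{jk}$ for all $j, k$, hence $\sum_{i, j} y_{ij} = \sum_{j \in J_3} \sum_{k \in \mathbb{Z}} d_{jk}$, matching the claimed objective value.

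It remains to verify feasibility. The machine constraint is straightforward: since $\alpha_{jk} \leq 1$, we have $\sum_{j \in J_3} y_{ij} \leq \sum_{j \in J_3} x'_{ij} \leq 26$ by Property~\ref{eq:xprime-load} of \cref{lem:gs-split-solution}. For the polymatroid constraints, fix $j \in J_3$ and $U \subseteq M$. Partitioning $U$ according to the classes $M_{jk}$ and writing $U_k := U \cap M_{jk}$, it suffices to show
\[
\sum_{i \in U_k} y_{ij} \;\leq\; \min\{r_j(U_k),\, d_{jk}\}
\]
for every $k$, since summing over $k$ then yields $\sum_{i \in U} y_{ij} \leq r'_j(U)$.

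The bound $\sum_{i \in U_k} y_{ij} \leq d_{jk}$ is immediate from $\sum_{i \in M_{jk}} y_{ij} = d_{jk}$ and $y \geq 0$. For the matroid bound, I would use Properties~\ref{eq:xprime-total} and~\ref{eq:xprime-matroid} of \cref{lem:gs-split-solution}: since $\alpha_{jk} \leq 1$,
\[
\sum_{i \in U_k} y_{ij} \;\leq\; \sum_{i \in U_k} x'_{ij} \;=\; \sum_{S \subseteq M} |S \cap U_k|\, x'(S, j),
\]
and whenever $x'(S, j) > 0$ we have $S \in \mathcal{F}_j$, so $S \cap U_k$ is independent in the matroid $\mathcal{F}_j$ and therefore $|S \cap U_k| \leq r_j(U_k)$. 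Combined with $\sum_S x'(S, j) = 1$, this yields $\sum_{i \in U_k} x'_{ij} \leq r_j(U_k)$, as desired. There is no real obstacle here: the whole argument is a clean scaling of $x'$, and the only point that needs care is bookkeeping the two bounds per class so that the minimum appearing in $r'_j$ is matched correctly.
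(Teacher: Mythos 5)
Your proposal is correct and follows essentially the same route as the paper's proof: the same per-class scaling of $x'$ by $d_{jk}/\sum_{i'\in M_{jk}} x'_{i'j}$, the same use of Property~3 for the machine constraint, and the same use of Properties~1 and~2 (via $|S\cap U\cap M_{jk}|\leq r_j(U\cap M_{jk})$ for sets in the support) for the polymatroid constraint. No gaps.
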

\begin{proof}[sketch]
For $j \in J_3$ and $i \in M_{jk}$, let $y'_{ij} := \frac{d_{jk}}{\sum_{i' \in M_{jk}} x'_{i'j}} x'_{ij}$ if \mbox{$d_{jk} > 0$} and $y'_{ij} := 0$ otherwise.
By construction, $\sum_{i \in M} \sum_{j \in J_3} y'_{ij} = \sum_{j \in J_3} \sum_{k \in \mathbb{Z}} d_{jk}$. 

To see that $y'$ is also feasible, observe that $\sum_{j \in J_3} y'_{ij} \leq \sum_{j \in J_3} x'_{ij} \leq 26$ by Property~\ref{eq:xprime-load} of \cref{lem:gs-split-solution} and the fact that $y'_{ij} \leq x'_{ij}$ by construction.
Moreover, Properties~\ref{eq:xprime-total}~and~\ref{eq:xprime-matroid} imply $\sum_{i \in U \cap M_{jk}} y'_{ij} \leq \min \{d_{jk},\, r(U \cap M_{jk})\}$ for all $U \subseteq M$ and all $k \in \mathbb{Z}$.
As $M_{jk} \cap M_{jk'} = \emptyset$ for $k \neq k'$, we conclude that $\sum_{i \in U} y'_{ij} \leq r'(U)$. Hence $y'$ is a feasible solution to \eqref{LPpm}. 
\qed
\end{proof}

To prove \cref{lem:gs-processing-time-bound}, we use the following consequence of the properties of $x'$ described in \cref{lem:gs-split-solution}.

\begin{restatable}{lemma}{restateSpeedSum}\label{lem:gs-speed-sum}
  $\sum_{k \in \mathbb{Z}} \frac{1}{2^kC} d_{jk} \geq \frac{69}{40C}$ for every $j \in J_3$.
\end{restatable}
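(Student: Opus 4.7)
The plan is to lower-bound $\sum_{k \in \mathbb{Z}} d_{jk}/(2^k C)$ by relating it to the fractional analogue $\sum_{k \in \mathbb{Z}} s_{jk}/(2^k C)$, where $s_{jk} := \sum_{i \in M_{jk}} x'_{ij}$, and then bounding the loss incurred by replacing $s_{jk}$ with its floor $d_{jk}$. Property~\ref{eq:xprime-prices} of \cref{lem:gs-split-solution} supplies the lower bound on the fractional side, while the geometric structure of the buckets $M_{jk}$, combined with Property~\ref{eq:xprime-matroid}, controls the flooring loss.

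For the main calculation, I would use the defining inequality $\mu_i/\lambda_j \leq 1/(2^k C)$ for $i \in M_{jk}$ to obtain
\[
\sum_{k \in \mathbb{Z}} \frac{s_{jk}}{2^k C} \geq \sum_{k \in \mathbb{Z}} \sum_{i \in M_{jk}} \frac{\mu_i}{\lambda_j} x'_{ij} = \sum_{i \in M} \frac{\mu_i}{\lambda_j} x'_{ij} \geq \frac{79}{40 C},
\]
using Property~\ref{eq:xprime-prices} in the final step. Since $s_{jk} - d_{jk} < 1$ for every $k$, the flooring loss $\sum_k (s_{jk} - d_{jk})/(2^k C)$ is strictly less than $\sum_{k \,:\, s_{jk} > 0} 1/(2^k C)$. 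Provided this last sum is supported on $k \geq 3$, it is at most the geometric tail $\sum_{k \geq 3} 1/(2^k C) = 1/(4C) = 10/(40C)$, and subtracting from the bound above yields $\sum_k d_{jk}/(2^k C) > 79/(40C) - 10/(40C) = 69/(40C)$.

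The main obstacle is therefore establishing the support condition: every $i$ with $x'_{ij} > 0$ satisfies $\mu_i/\lambda_j < 1/(8C)$, which forces the flooring-loss sum to start at $k = 3$ and hence to be controlled by exactly the right geometric tail. I would derive this by chaining three earlier ingredients. First, Property~\ref{eq:xprime-matroid} of \cref{lem:gs-split-solution} places any such $i$ inside some $S \subseteq M \setminus M^+_j$ with $S \in \mathcal{F}_j$; since $\mathcal{F}_j$ is a matroid, it is closed under taking subsets, so $\{i\} \in \mathcal{F}_j$. Next, \cref{lem:prices-to-speed} applied to $\{i\}$ gives $\mu_i/\lambda_j \leq 2 g_j(\{i\})$. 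Finally, $i \notin M^+_j$ yields $g_j(\{i\}) < 1/(16C)$, whence $\mu_i/\lambda_j < 1/(8C)$ as required. With this claim in hand, the constants $79/40$ (from Property~\ref{eq:xprime-prices}) and $10/40$ (the tail $\sum_{k \geq 3} 2^{-k}$) combine by subtraction to give exactly the target $69/(40C)$.
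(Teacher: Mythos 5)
Your proof is correct and follows essentially the same route as the paper's: bound the fractional sum $\sum_k s_{jk}/(2^kC)$ below by $\sum_i (\mu_i/\lambda_j)x'_{ij} \geq \frac{79}{40C}$ via Property~4, charge the flooring loss to the geometric tail $\sum_{k\geq 3} 2^{-k}/C = \frac{1}{4C}$, and justify that the support starts at $k=3$ by combining Property~2, downward closure of $\mathcal{F}_j$, \cref{lem:prices-to-speed}, and the definition of $M^+_j$ to get $\mu_i/\lambda_j < \frac{1}{8C}$. This is exactly the paper's argument.
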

\begin{proof}
Note that $x'(S, j) > 0$ implies $S \cap M^+_j = \emptyset$ by Property~\ref{eq:xprime-matroid} of \cref{lem:gs-split-solution} and hence $\frac{\mu_i}{\lambda_j} \leq 2g_j(\{i\}) < \frac{1}{8C}$ for all $i \in S$. Hence $d_{jk} = 0$ for $k < 3$ and thus
  \begin{align*}
    \sum_{k \in \mathbb{Z}} \frac{1}{2^kC} d_{jk} & \; = \; \sum_{k = 3}^{\infty} \frac{1}{2^kC} \bigg\lfloor \sum_{i \in M_{jk}} x'_{ij} \bigg\rfloor
    \; \geq \; \elsum{i \in M} \frac{\mu_i}{\lambda_j} x'_{ij} \; - \sum_{k = 3}^{\infty} \frac{1}{2^kC}\\
    & \; \geq \; \elsum{S \subseteq M}  \sum_{i \in S} \frac{\mu_i}{\lambda_j} x'(S, j) \; - \; \frac{1}{4C}
    \; \geq \; \frac{79}{40C} - \frac{1}{4C} = \frac{69}{40C},
  \end{align*}
  where the last inequality follows from Property~\ref{eq:xprime-prices} of \cref{lem:gs-split-solution}. \qed
\end{proof}

We are now ready to combine \cref{lem:gs-pmi-value,lem:gs-speed-sum} and the fact that \mbox{$S_j \in \mathcal{F}_j$} to show that each job in $J_3$ is indeed assigned sufficient processing speed.

\begin{lemma}\label{lem:gs-processing-time-bound}
  For $j \in J_3$ let $S_j := \{i \in M \st y_{ij} > 0\}$. Then $g_j(S_j) \geq \frac{69}{320C}$.
\end{lemma}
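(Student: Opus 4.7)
The plan is to deduce from the integer optimality of $y$ that $S_j$ lies in the matroid $\mathcal{F}_j$ and that $|S_j \cap M_{jk}| = d_{jk}$ for every $k$, then apply \cref{lem:prices-to-speed} to turn matroid membership into a weight-based lower bound on $g_j(S_j)$, and finally close the estimate with \cref{lem:gs-speed-sum}.

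First I would unpack the optimality of $y$. By \cref{lem:gs-pmi-value} the optimum value of \eqref{LPpm} is at least $\sum_{j \in J_3}\sum_{k} d_{jk}$, and from the polymatroid constraints applied to $U = M_{jk}$ we have $\sum_{i \in M_{jk}} y_{ij} \leq r'_j(M_{jk}) = \min\{r_j(M_{jk}), d_{jk}\} \leq d_{jk}$. Summing over $k$ (the $M_{jk}$ partition $M$) gives $\sum_i y_{ij} \leq \sum_k d_{jk}$ for each $j \in J_3$, which must hold with equality in every class because the total is tight. Integrality of $y$ then yields $|S_j \cap M_{jk}| = d_{jk}$, and applying the polymatroid constraint with $U = S_j \cap M_{jk}$ also forces $d_{jk} \leq r_j(S_j \cap M_{jk})$, so each $S_j \cap M_{jk}$ is independent in $\mathcal{F}_j$.

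Next I would argue $S_j \in \mathcal{F}_j$ itself. This is the main obstacle: per-class independence only certifies independence of $S_j$ in the direct-sum matroid associated with $r'_j$, not in $\mathcal{F}_j$. To bridge this I would exploit Property~\ref{eq:xprime-matroid} of \cref{lem:gs-split-solution}---$x'$ is supported on $\mathcal{F}_j$---together with the $M^\natural$-concavity of $g_j$. The idea is to use the exchange structure of the maximizer family $\mathcal{D}_j$ to merge the per-class covers $T_k \in \mathcal{D}_j$ (with $S_j \cap M_{jk} \subseteq T_k$) into a single $T^* \in \mathcal{D}_j$ containing $S_j$, or equivalently to select the integer optimum of \eqref{LPpm} so that its support for each $j$ sits inside some base of $\mathcal{F}_j$. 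The fact that the fractional witness $y'$ in the proof of \cref{lem:gs-pmi-value} is already supported on $\mathcal{F}_j$ suggests that a polymatroid-intersection routine will return such an integer $y$.

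Once $S_j \in \mathcal{F}_j$ is in hand the rest is a short calculation. \cref{lem:prices-to-speed} gives
\[
  g_j(S_j) \;\geq\; \tfrac{1}{2}\sum_{i \in S_j} \tfrac{\mu_i}{\lambda_j} \;=\; \tfrac{1}{2}\sum_{k \in \mathbb{Z}} \sum_{i \in S_j \cap M_{jk}} \tfrac{\mu_i}{\lambda_j},
\]
and since $\mu_i/\lambda_j > 1/(2^{k+1}C)$ for $i \in M_{jk}$ and $|S_j \cap M_{jk}| = d_{jk}$, the right-hand side is at least $\tfrac{1}{4C}\sum_{k} d_{jk}/2^k$. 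Invoking \cref{lem:gs-speed-sum} bounds this by $\tfrac{69}{160 C} \geq \tfrac{69}{320 C}$, as required. The hard part, as noted, will be securing $S_j \in \mathcal{F}_j$; the other steps are essentially bookkeeping.
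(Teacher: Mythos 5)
You have correctly identified the crux, but you have not closed it, and the route you sketch for closing it is not the one that works. Your argument needs $S_j \in \mathcal{F}_j$, yet all that the constraints of \eqref{LPpm} certify is per-class independence: $r'_j$ is the direct sum over $k$ of the truncated restrictions of $r_j$ to the classes $M_{jk}$, so an integral feasible $y$ can have each $S_j \cap M_{jk}$ independent while the union $S_j$ is dependent in $\mathcal{F}_j$. Your two proposed bridges --- merging per-class witnesses $T_k \in \mathcal{D}_j$ via the exchange structure of $\mathcal{D}_j$, or hoping that a polymatroid-intersection routine returns an optimum supported inside a base of $\mathcal{F}_j$ because the fractional witness $y'$ of \cref{lem:gs-pmi-value} is so supported --- are both unsubstantiated; the support structure of a fractional solution does not transfer to an integral optimum, and no argument is given that the merge terminates with a single member of $\mathcal{D}_j$ containing all of $S_j$. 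As written, the proof has a genuine gap at exactly the step you flag as ``the hard part.''

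The missing idea is that $S_j \in \mathcal{F}_j$ is not needed. Take $T_j \subseteq S_j$ with $T_j \in \mathcal{F}_j$ maximizing $\sum_{i \in T_j} \mu_i/\lambda_j$, computed by the matroid greedy algorithm (such a set exists in any matroid). Processing elements in decreasing weight order, the greedy set satisfies $\bigl|T_j \cap \bigcup_{k=0}^{\ell} M_{jk}\bigr| \geq r_j(S_j \cap M_{j\ell})$ for every $\ell$, and since each element of $M_{jk}$ has weight greater than $\tfrac{1}{2^{k+1}C}$, a standard summation (absorbing the double counting across classes into a factor $\tfrac{1}{2}$) yields
\[
\sum_{i \in T_j} \frac{\mu_i}{\lambda_j} \;\geq\; \frac{1}{2}\sum_{k=0}^{\infty} r_j(S_j \cap M_{jk})\,\frac{1}{2^{k+1}C} \;\geq\; \frac{1}{4}\sum_{k=0}^{\infty} \frac{1}{2^{k}C}\, d_{jk} \;\geq\; \frac{69}{160C},
\]
using $r_j(S_j \cap M_{jk}) \geq \sum_{i \in M_{jk}} y_{ij} = d_{jk}$ (which you did establish) and \cref{lem:gs-speed-sum}. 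Then \cref{lem:prices-to-speed} applied to $T_j$ and monotonicity of $g_j$ give $g_j(S_j) \geq g_j(T_j) \geq \tfrac{69}{320C}$. Note that the extra factor of $2$ lost in passing from $S_j$ to $T_j$ is precisely why the lemma claims $\tfrac{69}{320C}$ rather than the $\tfrac{69}{160C}$ your calculation would produce if $S_j$ itself were independent; the slack you observed in your final step is not spare room but the price of the correct argument.
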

\begin{proof}
Note that
  \begin{align*}
    \sum_{j \in J_3} \sum_{k \in \mathbb{Z}} d_{jk} & \leq \sum_{j \in J_3} \sum_{i \in M} y_{ij} = \sum_{j \in J_3} \sum_{k \in \mathbb{Z}} \sum_{i \in M_{jk}} y_{ij} \leq \sum_{j \in J_3} \sum_{k \in \mathbb{Z}} r'_j(M_{jk}) \leq \sum_{j \in J_3} \sum_{k \in \mathbb{Z}} d_{jk} 
  \end{align*}
  where the first identity follows from by \cref{lem:gs-pmi-value} and the final two inequalities follow from feasibility of $y$ and definition of $r'_j$, respectively. We conclude that all inequalities are fulfilled with equality, which is only possible if $\sum_{i \in M_{jk}} y_{ij} = d_{jk}$ for all $j \in J_3$ and $k \in \mathbb{Z}$. 

  Let $T_j \subseteq S_j$ with $T_j \in \mathcal{F}_j$ maximizing $\sum_{i \in T_j} \frac{\mu_i}{\lambda_j}$ computed by the matroid greedy algorithm.
  The greedy algorithm ensures
  $\left|T_j \cap \bigcup_{k = 0}^{\ell} M_{jk}\right| \geq r_j(S_j \cap M_{j\ell})$ for all $\ell \in \mathbb{Z}$. We conclude that
  \begin{align*}
    \sum_{i \in T_j} \frac{\mu_i}{\lambda_j} & \geq \frac{1}{2} \sum_{k = 0}^{\infty} r_j(S_j \cap M_{jk}) \frac{1}{2^{k+1}C} \geq \frac{1}{2} \sum_{k = 0}^{\infty} \frac{1}{2^{k+1}C} \elsum{i \in M_{jk}} y_{ij} \\
    & \geq \frac{1}{4} \sum_{k = 0}^{\infty} \frac{1}{2^{k}C} d_{jk} \geq \frac{69}{160C},
  \end{align*}
  where the last inequality follows from \cref{lem:gs-speed-sum}.
Because $T_j \in \mathcal{F}_j$, we conclude that $g_j(S_j) \geq g_j(T_j) \geq \frac{69}{320C}$ by \cref{lem:prices-to-speed}.\qed
\end{proof}

\section{Generalized Malleable Jobs and Fair Allocations}
\label{sec:fair-allocation}

In this section, we explore an interesting relation between generalized malleable scheduling and \textsc{Max-min Fair Allocation} (MMFA). In this problem, we are given a set of {\em items} $I$ and a set of {\em agents} $A$. Each agent $j \in A$ has a \emph{utility function} $u_j: 2^{I} \to \mathbb{R}_{\geq 0}$ on the items. Our goal is to assign the items to the agents in a way to maximize the minimum utility, that is, to find an assignment $\mathbf{T}$ with $|\{j \in A : i \in T_j\}| \leq 1$ for all $i \in I$ so as to maximize $\min_{j \in A} u_j(T_j)$.

We show that any approximation algorithm for malleable scheduling implies a \emph{resource-augmented} approximation for MMFA in which some items may be assigned to a small number of agents (this can be interpreted as a moderate way of splitting these items, e.g., multiple agents sharing a resource by taking turns).

To formalize this result, we establish two definitions.
A $\beta$-augmented $\alpha$-approximation algorithm for MMFA is an algorithm that given an MMFA instance computes in polynomial time an assignment $\mathbf{T}$ with $\min_{j \in A} u_j(T_j) \geq \frac{1}{\alpha} V^*$ and $|\{j \in A : i \in T_j\}| \leq \beta$ for all $i \in I$, where $V^*$ is the optimal solution value of the MMFA instance.
Moreover, we say that a class of set functions $\mathcal{C}$ is \emph{closed under truncation} if for any $h \in \mathcal{C}$ and any $t \in \mathbb{R}_{\geq 0}$, the function $h^t$ defined by $h^t(S) := \min \{h(S), t\}$ is contained in $\mathcal{C}$.

\begin{theorem}
	\label{thm:mmfa}
  Let $\mathcal{C}$ be a class of set functions closed under truncation.
  If there is an $\alpha$-approximation algorithm for \textsc{Assignment} with processing speeds from $\mathcal{C}$, then there is an $\lfloor \alpha \rfloor$-augmented $\alpha$-approximation algorithm for MMFA with utilities from $\mathcal{C}$.
\end{theorem}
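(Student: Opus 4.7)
The plan is to reduce MMFA to \textsc{Assignment} via a truncation-based black-box reduction. Given an MMFA instance and a target value $V > 0$ (a guess for $V^*$), I construct an \textsc{Assignment} instance on the same ground sets ($I$ becomes the machine set, $A$ becomes the job set), defining the processing speed of a bundle $S \subseteq I$ for agent $j$ by $g_j^V(S) := \min\{u_j(S), V\}$. These truncated speeds lie in $\mathcal{C}$ by hypothesis, so the given $\alpha$-approximation for \textsc{Assignment} applies.

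The key coupling is that whenever $V \leq V^*$, any optimal MMFA solution $\mathbf{T}^*$ is feasible for \textsc{Assignment} with load at most $1/V$: every item is used by at most one agent, so each machine carries at most one job with processing time $1/g_j^V(T_j^*) \leq 1/V$ (using $u_j(T_j^*) \geq V^* \geq V$). Running the $\alpha$-approximation therefore yields an \textsc{Assignment} $\mathbf{S}$ of load at most $\alpha/V$, and I would simply return $\mathbf{T} := \mathbf{S}$ as the MMFA output. Two simple extractions verify the guarantees: for every agent $j$, since $S_j$ is non-empty, the processing time $1/g_j^V(S_j)$ is bounded by the load of some machine in $S_j$, giving $g_j^V(S_j) \geq V/\alpha$ and hence $u_j(S_j) \geq V/\alpha$; for every item $i$, the truncation forces each term $1/g_j^V(S_j) \geq 1/V$ in machine $i$'s load, so the number of agents receiving $i$ is at most $(\alpha/V)/(1/V) = \alpha$, which being integral is at most $\lfloor \alpha \rfloor$.

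Finally, since $V^*$ is unknown, I would handle the guessing via standard binary search (or a polynomial-size geometric grid) over $V$ within the range of achievable utilities, returning the best MMFA output produced across all trials. Because the procedure is guaranteed to succeed for every $V \leq V^*$, the returned solution has utility arbitrarily close to $V^*/\alpha$ (exactly $V^*/\alpha$ up to negligible search precision). The main potential obstacle is this search step, but it is routine; the heart of the argument is the truncation trick combined with the two-sided use of the load bound $\alpha/V$ to simultaneously recover an agent-utility lower bound (because $S_j$ is non-empty) and an item-sharing upper bound (because truncation uniformly bounds processing times from below by $1/V$).
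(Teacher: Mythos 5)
Your proposal is correct and follows essentially the same route as the paper: truncate the utilities at the guessed value $V$ to obtain processing speeds in $\mathcal{C}$, observe that a value-$V$ MMFA solution yields an assignment of load $1/V$, and then read off both the per-agent guarantee $g_j(S_j)\ge V/\alpha$ and the sharing bound $\lfloor\alpha\rfloor$ from the load bound $\alpha/V$ combined with $g_j(S_j)\le V$. The binary-search handling of $V$ is treated at the same level of (in)formality as in the paper, so no gap there either.
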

\begin{proof}
    Given an instance of MMFA and a target value $V$ for the minimum utility (to be determined by binary search),
    let $M := I$ and $J := A$, i.e., we introduce a machine for each item and a job for each agent.
    Define processing speeds $g_j$ for $j \in J$ by $g_j(S) := \min \{u_j(S), V\}$ for $S \subseteq M$.
    Now apply the $\alpha$-approximation algorithm to this \textsc{Assignment} instance and obtain an assignment $\mathbf{S}$.
    
    If $\max_{i \in M} \sum_{j \in J : i \in S_j} 1/g_j(S_j) \leq \alpha / V$, then return the assignment $\mathbf{S}$ as a solution to the MMFA instance (note that in this case, $g_j(S_j) \geq V/\alpha$ for each $j \in J$ and $|\{j \in J : i \in S_j\}| \leq \lfloor \alpha \rfloor$ for all $i \in M$, because $g_j(S_j) \leq V$ for all $j \in J$).
    If, on the other hand, $\max_{i \in M} \sum_{j \in J : i \in S_j} 1/g_j(S_j) > \alpha / V$, then we conclude that the MMFA instance does not allow for a solution of value $V$ (because such a solution would have load $1/V$ in the \textsc{Assignment} instance). \qed
\end{proof}

This black-box reduction, together with the $3.16$-approximation for speed-implementable functions~\cite{fotakis2020malleable} implies a $3$-augmented $3.16$-approxi\-mation for the well-known \textsc{Santa Claus} problem~\citep{bansal2006santa} (the special case of MMFA with linear utilities). A more careful analysis delivers the following stronger result:
\begin{restatable}{corollary}{restateSantaClaus}
\label{cor:santa}
    There is a $2$-augmented $2$-approximation for \textsc{Santa Claus}.
\end{restatable}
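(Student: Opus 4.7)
The plan is to bypass the generic reduction of \cref{thm:mmfa} once specialized to linear utilities, and to design a $2$-approximation algorithm directly for the \textsc{Assignment} instance produced by the reduction---whose processing speeds $g_j(S) = \min\{\sum_{i \in S} v_{ij},\, V\}$ are truncated linear, a simple subclass of $M^\natural$-concave functions. If I can achieve load at most $2/V$ in this \textsc{Assignment} instance whenever the MMFA optimum is at least $V$, then the proof of \cref{thm:mmfa} automatically yields a $2$-augmented $2$-approximation, since a machine of load $\leq 2/V$ contains at most $\lfloor 2 \rfloor = 2$ jobs and every chosen set $S_j$ satisfies $g_j(S_j) \geq V/2$.

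First, I would replace the configuration LP by the compact assignment relaxation with fractional variables $y_{ij} \in [0, 1]$:
\begin{align*}
\sum_{j \in A} y_{ij} \ &\leq \ 1 && \forall\, i \in I,\\
\sum_{i \in I} v_{ij}\, y_{ij} \ &\geq \ V && \forall\, j \in A.
\end{align*}
This LP is feasible whenever the MMFA optimum is at least $V$, and it can be solved in polynomial time; binary searching over $V$ (using thresholds drawn from $\{v_{ij}\}$ and their bounded-cardinality sums) identifies the largest such $V$, lower-bounding $V^*$.

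Next, I would compute an extreme point $y^*$ of this LP and round it component by component. A standard counting argument shows that the bipartite graph supported by the strictly fractional $y^*_{ij}$ is a pseudoforest: at a basic feasible solution the number of non-integral coordinates is at most the number of tight non-bound constraints, namely $|I|+|A|$. Items $i$ with $y^*_{ij}=1$ are assigned immediately. In each fractional component, I would orient the (at most one) cycle into a tree, root it at an arbitrary agent, and process bottom-up, giving each agent all of its incident fractional items except the one connecting it to its parent; the parent inherits that contested item as well, so every item appears in at most two agents' sets. Because utilities are additive, each agent $j$ collects at least $\sum_i v_{ij} y^*_{ij} - \max_i v_{ij}$. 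Separating items into \emph{big} ($v_{ij} \geq V/2$) and \emph{small} ($v_{ij} < V/2$) then upgrades this to a clean $V/2$ bound: either $j$ already owns a big item (which alone meets the target), or $j$ sees only small items whose fractional sum exceeds $V$ and whose single largest removal still leaves at least $V - V/2 = V/2$.

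The main obstacle is making the case analysis uniform, so that the utility guarantee $V/2$ and the sharing bound $2$ hold simultaneously on every connected component, including the unicyclic ones. The delicate case is an agent whose fractional utility is concentrated on a single big item that the rounding hands up to its parent; one must argue, using the structure of extreme points of the above LP, that such an agent either already owns a full integral big item or accumulates a sufficiently valuable collection of small items to meet the $V/2$ threshold on its own. Once this case analysis is settled, concatenating the rounded assignments over all components produces the $2$-augmented $2$-approximation claimed in \cref{cor:santa}.
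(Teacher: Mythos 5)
Your overall plan is viable and genuinely different from the paper's route (the paper does not re-round from scratch: it feeds the truncated-linear speeds into the LP and rounding scheme of the speed-implementable model of Fotakis et al.\ and then sharpens the accounting of that rounding to get ``at most two jobs per machine'' and ``speed at least $1/(2C)$''). However, your proof has a real gap, and you flag it yourself: the ``delicate case'' is not settled, and the big/small dichotomy you propose does not settle it. The dichotomy ``either $j$ owns a big item or $j$ sees only small items'' is false for your rounding: an agent can own no integral item at all and have a \emph{big} fractional parent item $i_p$ carrying most of its LP value (say $y^*_{i_p j}=0.9$, $v_{i_p j}=V$), in which case your bound $\sum_i v_{ij}y^*_{ij}-\max_i v_{ij}$ degenerates to something far below $V/2$. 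Moreover, the sharing rule as written (``each agent keeps all incident fractional items except the parent item; the parent inherits the contested item as well'') cannot simultaneously deliver congestion $2$ and value $V/2$: an item node in the support pseudoforest may have many child agents, so either they all keep it (congestion unbounded) or all but one lose it (and the argument above shows some of them may then fall below $V/2$).

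The correct way to close the case --- and the place where the factor $2$ actually comes from --- is to account for the loss as $v_{i_p j}\,y^*_{i_p j}\le V\cdot y^*_{i_p j}$ (this uses the truncation $v_{ij}\le V$, and the fact that an agent keeps its children items \emph{integrally}, so it loses only the parent term of its LP value). Hence an agent is in danger only if $y^*_{i_p j}>\tfrac12$, and by the item constraint $\sum_{j'} y^*_{i j'}\le 1$ each item has \emph{at most one} incident agent with $y^*_{ij}>\tfrac12$. So the rule should be: every item goes to its parent agent, and additionally to the unique child agent holding more than half of it, if any. This gives congestion at most $2$ and value at least $V(1-y^*_{i_p j})\ge V/2$ when $y^*_{i_p j}\le\tfrac12$, and at least $\sum_i v_{ij}y^*_{ij}\ge V$ otherwise. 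You also need to treat the unique cycle in each component more carefully than ``orient it into a tree'': simply discarding a cycle edge can wipe out most of some agent's value, so you must use the standard extreme-point treatment (match agents on the even cycle to one of their two cycle items) before rooting the remaining forest. With these two repairs your argument goes through and is, if anything, more self-contained than the paper's; as written, the central step is asserted rather than proved, and the asserted mechanism for proving it is incorrect.
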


Although $M^\natural$-concave functions are not closed under truncation as defined above, a slightly different form of truncating processing speeds allows us to apply the reduction on the constant-factor approximation for $M^\natural$-concave processing speeds presented in this paper. We thus obtain the following result.

\begin{restatable}{corollary}{restateMMFAnatural}
\label{cor:mmfa-natural}
There exists a $\mathcal{O}(1)$-augmentation $\mathcal{O}(1)$-approximation algorithm for MMFA with $M^\natural$-concave utilities. 
\end{restatable}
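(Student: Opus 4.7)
The generic reduction in \cref{thm:mmfa} crucially sets $g_j(S) := \min\{u_j(S), V\}$ so that the resulting processing speeds are bounded by $V$, which is precisely what caps the number of agents sharing an item by $\lfloor \alpha \rfloor$. Unfortunately, value-capping an $M^\natural$-concave function does not in general preserve $M^\natural$-concavity, so the speeds $g_j$ do not satisfy the hypothesis of \cref{thm:main}. The plan is to replace value-capping with a combination of ground-set restriction and cardinality truncation, both of which preserve $M^\natural$-concavity, and to verify that the replacement function is still bounded by a constant multiple of $V$ and large whenever $u_j$ is.

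Fix a target value $V$ (determined by binary search over the polynomially many possible values of the optimum). For each agent $j \in A$, partition the items into \emph{large} items $M_j^+ := \{i \in I : u_j(\{i\}) \geq V\}$ and \emph{small} items $M_j := I \setminus M_j^+$. An agent with a large item can be satisfied by a single item, so we handle these agents by a dedicated bipartite-matching step that reserves one large item per agent while contributing only $+1$ to the augmentation. For the remaining (small-item) part, let $k_j$ be the smallest cardinality such that some $T \subseteq M_j$ with $|T| \leq k_j$ attains $u_j(T) \geq V$, and define
\begin{equation*}
g_j(S) \; := \; \max\bigl\{ u_j(T) : T \subseteq S \cap M_j,\; |T| \leq k_j \bigr\}.
\end{equation*}
Restriction of the ground set preserves $M^\natural$-concavity, and cardinality-bounded maximization is an infimal convolution of $u_j|_{M_j}$ with the rank function of a uniform matroid, which also preserves $M^\natural$-concavity (see Murota's monograph cited in the paper). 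Thus $g_j$ is $M^\natural$-concave. Moreover, the choice of $k_j$ together with submodularity and $u_j(\{i\}) < V$ for $i \in M_j$ gives $g_j(S) \leq 2V$ for all $S \subseteq I$.

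Having constructed $M^\natural$-concave speeds $g_j$, we invoke \cref{thm:main} on the induced \textsc{Assignment} instance with machines $M = I$ and jobs $J = A$. Mirroring the proof of \cref{thm:mmfa}, if the maximum machine load in the returned assignment $\mathbf{S}$ exceeds $\alpha/V$ we conclude infeasibility of the current $V$; otherwise $g_j(S_j) \geq V/\alpha$ for every $j$ used on some machine, and we extract from each $S_j \cap M_j$ a witness $T_j$ of size at most $k_j$ realizing $u_j(T_j) = g_j(S_j) \geq V/\alpha$. The bound $g_j(S_j) \leq 2V$ forces $|\{j : i \in T_j\}| \leq 2\alpha$ per item, and merging with the large-item matching output yields total augmentation $2\alpha + 1 = O(1)$ and approximation factor $\alpha = O(1)$. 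The main obstacle is to formally establish that the cardinality-constrained truncation of an $M^\natural$-concave function is $M^\natural$-concave (which is where the ``slightly different form of truncating'' alluded to in the paper enters) and to argue that the interaction between the large-item matching step and the small-item assignment does not inflate the augmentation beyond a constant.
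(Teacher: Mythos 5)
Your plan diverges substantially from the paper's, and it has a genuine gap in the large-item/small-item split. By restricting each $g_j$ to the small items $M_j = \{i : u_j(\{i\}) < V\}$ and shunting ``large-item agents'' into a separate matching, you must decide \emph{in advance} which agents are to be served by a large item and which by small items, and no such decision is safe: an optimal MMFA solution of value $V$ may serve agent $j$ with a large item while every small-item set $T \subseteq M_j$ with $u_j(T) \geq V$ heavily overlaps other agents' allocations. In that case your small-item \textsc{Assignment} instance admits no assignment of load $\leq 1/V$ even though the MMFA optimum is $V$, so the reduction can no longer certify infeasibility of the guess $V$ (and conversely, matching \emph{all} agents that own some large item can fail because two such agents may compete for the same large item while one of them is served by small items in the optimum). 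Your deferred lemma that $S \mapsto \max\{u_j(T) : T \subseteq S \cap M_j,\ |T|\leq k_j\}$ is $M^\natural$-concave is in fact true (restrict to $\{T : |T| \leq k_j\}$, which preserves $M^\natural$-concavity via the correspondence with M-concave functions, then monotonize by convolving with the zero function), but it is not the route the paper takes and it does not repair the feasibility direction above.

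The paper's fix is much lighter: it keeps the full ground set and sets $g_j(S) := u_j(S) - \sum_{i \in S} p_{ij}$ with $p_{ij} := \max\{u_j(\{i\}) - V, 0\}$, i.e., it subtracts a modular function, which preserves $M^\natural$-concavity for free and caps every \emph{singleton} value at $V$. Feasibility then transfers directly: from an MMFA solution of value $V$ one takes $S_j := T_j$ if all items in $T_j$ are small (so $g_j(S_j) = u_j(T_j) \geq V$) and otherwise shrinks $S_j$ to a single large item $i$ (so $g_j(\{i\}) = V$), yielding load $\leq 1/V$. The price is that $g_j(S)$ is \emph{not} bounded by $O(V)$ for larger sets, so the black-box augmentation bound of \cref{thm:mmfa} (the route you mirror via $g_j \leq 2V$) is unavailable; instead the paper opens up the algorithm of \cref{thm:assignment-approx} and bounds the number of jobs per machine step by step ($32$ from $J_1$ using the singleton cap, $20$ from $J_2$, $26$ from $J_3$), giving augmentation $78$. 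If you want to salvage your approach, you would need to integrate the large-item option into the speed function itself rather than into a separate preprocessing stage.
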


\section{Conclusion}

In this paper we have presented a constant-factor approximation for generalized malleable scheduling under $M^\natural$-concave processing speeds.
To achieve a constant approximation guarantee, our algorithm makes extensive use of structural results from discrete convex analysis. We think that some of the techniques, such as the rounding technique for weighted polymatroids in \cref{sec:step3B}, might be of independent interest and applicable in other contexts as well.
We have not made any attempt to optimize the constant in the approximation ratio, but we expect that significant additional insights are required to achieve a reasonably small (single-digit) approximation guarantee. 

An intriguing open question is whether there exists a constant-factor approximation for generalized malleable scheduling under \emph{submodular} processing speeds, for which only a logarithmic approximation is known to date, together with a strong inapproximability result for the more general XOS functions~\cite{fotakis2021assigning}. 
Our present work is a significant progress in this direction, but several steps of our algorithm, such as the approximation of processing speeds via weighted matroids and the elimination of low-speed assignments crucially use structural properties of $M^\natural$-concavity not present in submodular functions. Overcoming these issues is an interesting direction for future research.

\bibliographystyle{splncs04nat}
\renewcommand{\bibsection}{\section*{References}}
\bibliography{ref}

\begin{thebibliography}{27}
\providecommand{\natexlab}[1]{#1}
\providecommand{\url}[1]{\texttt{#1}}
\providecommand{\urlprefix}{URL }
\expandafter\ifx\csname urlstyle\endcsname\relax
  \providecommand{\doi}[1]{doi:\discretionary{}{}{}#1}\else
  \providecommand{\doi}{doi:\discretionary{}{}{}\begingroup
  \urlstyle{rm}\Url}\fi

\bibitem[{Bampis et~al.(2020)Bampis, Dogeas, Kononov, Lucarelli, and
  Pascual}]{bampis2020scheduling}
Bampis, E., Dogeas, K., Kononov, A., Lucarelli, G., Pascual, F.: Scheduling
  malleable jobs under topological constraints. In: 2020 IEEE International
  Parallel and Distributed Processing Symposium (IPDPS), pp. 316--325, IEEE
  (2020)

\bibitem[{Bansal and Sviridenko(2006)}]{bansal2006santa}
Bansal, N., Sviridenko, M.: The {Santa Claus} problem. In: Proceedings of the
  thirty-eighth annual ACM symposium on Theory of computing, pp. 31--40 (2006)

\bibitem[{Bartolini et~al.(2017)Bartolini, Dell'Amico, and
  Iori}]{bartolini2017scheduling}
Bartolini, E., Dell'Amico, M., Iori, M.: Scheduling cleaning activities on
  trains by minimizing idle times. Journal of Scheduling \textbf{20}(5),
  493--506 (2017)

\bibitem[{Blazewicz et~al.(2011)Blazewicz, Cheng, Machowiak, and
  Oguz}]{blazewicz2011berth}
Blazewicz, J., Cheng, T.E., Machowiak, M., Oguz, C.: Berth and quay crane
  allocation: a moldable task scheduling model. Journal of the Operational
  Research Society \textbf{62}(7), 1189--1197 (2011)

\bibitem[{Bleuse et~al.(2017)Bleuse, Hunold, Kedad-Sidhoum, Monna, Mouni{\'e},
  and Trystram}]{bleuse2017scheduling}
Bleuse, R., Hunold, S., Kedad-Sidhoum, S., Monna, F., Mouni{\'e}, G., Trystram,
  D.: Scheduling independent moldable tasks on multi-cores with {GPUs}. IEEE
  Transactions on Parallel and Distributed Systems \textbf{28}(9), 2689--2702
  (2017)

\bibitem[{Chen and Li(2021)}]{chen2021m}
Chen, X., Li, M.: {$M^\natural$-Convexity and Its Applications in Operations}.
  Operations Research  (2021)

\bibitem[{Correa et~al.(2015)Correa, Marchetti-Spaccamela, Matuschke, Stougie,
  Svensson, Verdugo, and Verschae}]{correa2015strong}
Correa, J., Marchetti-Spaccamela, A., Matuschke, J., Stougie, L., Svensson, O.,
  Verdugo, V., Verschae, J.: Strong {LP} formulations for scheduling splittable
  jobs on unrelated machines. Mathematical Programming \textbf{154}(1),
  305--328 (2015)

\bibitem[{Du and Leung(1989)}]{du1989complexity}
Du, J., Leung, J.Y.T.: Complexity of scheduling parallel task systems. SIAM
  Journal on Discrete Mathematics \textbf{2}(4), 473--487 (1989)

\bibitem[{Fotakis et~al.(2019)Fotakis, Matuschke, and
  Papadigenopoulos}]{fotakis2020malleable}
Fotakis, D., Matuschke, J., Papadigenopoulos, O.: {Malleable Scheduling Beyond
  Identical Machines}. In: Approximation, Randomization, and Combinatorial
  Optimization. Algorithms and Techniques (APPROX/RANDOM 2019), Leibniz
  International Proceedings in Informatics (LIPIcs), vol. 145, pp. 17:1--17:14,
  Dagstuhl, Germany (2019)

\bibitem[{Fotakis et~al.(2021)Fotakis, Matuschke, and
  Papadigenopoulos}]{fotakis2021assigning}
Fotakis, D., Matuschke, J., Papadigenopoulos, O.: {Assigning and Scheduling
  Generalized Malleable Jobs under Submodular Processing Speeds}. Tech. Rep.
  arXiv:2111.06225, arXiv.org (2021)

\bibitem[{Fujiwara et~al.(2018)Fujiwara, Tanaka, Taura, and
  Torisawa}]{fujiwara2018effectiveness}
Fujiwara, I., Tanaka, M., Taura, K., Torisawa, K.: Effectiveness of moldable
  and malleable scheduling in deep learning tasks. In: 2018 IEEE 24th
  International Conference on Parallel and Distributed Systems (ICPADS), pp.
  389--398, IEEE (2018)

\bibitem[{Gul and Stacchetti(1999)}]{gul1999walrasian}
Gul, F., Stacchetti, E.: Walrasian equilibrium with gross substitutes. Journal
  of Economic theory \textbf{87}(1), 95--124 (1999)

\bibitem[{Imai et~al.(2008)Imai, Chen, Nishimura, and
  Papadimitriou}]{imai2008simultaneous}
Imai, A., Chen, H.C., Nishimura, E., Papadimitriou, S.: The simultaneous berth
  and quay crane allocation problem. Transportation Research Part E: Logistics
  and Transportation Review \textbf{44}(5), 900--920 (2008)

\bibitem[{Jansen and Land(2018)}]{jansen2018scheduling}
Jansen, K., Land, F.: Scheduling monotone moldable jobs in linear time. In:
  2018 IEEE International Parallel and Distributed Processing Symposium
  (IPDPS), pp. 172--181, IEEE (2018)

\bibitem[{Jansen and Porkolab(2002)}]{jansen2002linear}
Jansen, K., Porkolab, L.: Linear-time approximation schemes for scheduling
  malleable parallel tasks. Algorithmica \textbf{32}(3), 507--520 (2002)

\bibitem[{Jansen and Th{\"o}le(2010)}]{jansen2010approximation}
Jansen, K., Th{\"o}le, R.: Approximation algorithms for scheduling parallel
  jobs. SIAM Journal on Computing \textbf{39}(8), 3571--3615 (2010)

\bibitem[{Kelso and Crawford(1982)}]{kelso1982job}
Kelso, A.S., Crawford, V.P.: Job matching, coalition formation, and gross
  substitutes. Econometrica: Journal of the Econometric Society pp. 1483--1504
  (1982)

\bibitem[{Lenstra et~al.(1990)Lenstra, Shmoys, and Tardos}]{LST90}
Lenstra, J.K., Shmoys, D.B., Tardos, E.: Approximation algorithms for
  scheduling unrelated parallel machines. Math. Program. \textbf{46}(3),
  259--271 (Feb 1990), ISSN 0025-5610

\bibitem[{Mounie et~al.(1999)Mounie, Rapine, and
  Trystram}]{mounie1999efficient}
Mounie, G., Rapine, C., Trystram, D.: Efficient approximation algorithms for
  scheduling malleable tasks. In: Proceedings of the eleventh annual ACM
  symposium on Parallel algorithms and architectures, pp. 23--32 (1999)

\bibitem[{Mouni{\'e} et~al.(2007)Mouni{\'e}, Rapine, and
  Trystram}]{mounie2007frac32}
Mouni{\'e}, G., Rapine, C., Trystram, D.: A $\backslash$frac32-approximation
  algorithm for scheduling independent monotonic malleable tasks. SIAM Journal
  on Computing \textbf{37}(2), 401--412 (2007)

\bibitem[{Murota(2003)}]{murota2003discrete}
Murota, K.: Discrete Convex Analysis, vol.~10. SIAM (2003)

\bibitem[{Murota(2009)}]{murota2009matrices}
Murota, K.: Matrices and matroids for systems analysis, vol.~20. Springer
  Science \& Business Media (2009)

\bibitem[{{Paes Leme}(2017)}]{leme2017gross}
{Paes Leme}, R.: Gross substitutability: An algorithmic survey. Games and
  Economic Behavior \textbf{106}, 294--316 (2017), ISSN 0899-8256

\bibitem[{Paes~Leme and Wong(2020)}]{leme2020computing}
Paes~Leme, R., Wong, S.C.: {Computing Walrasian equilibria: Fast algorithms and
  structural properties}. Mathematical Programming \textbf{179}(1), 343--384
  (2020)

\bibitem[{Schrijver(2003)}]{schrijver03}
Schrijver, A.: Combinatorial Optimization - Polyhedra and Efficiency. Springer
  (2003)

\bibitem[{Turek et~al.(1992)Turek, Wolf, and Yu}]{turek1992approximate}
Turek, J., Wolf, J.L., Yu, P.S.: Approximate algorithms scheduling
  parallelizable tasks. In: Proceedings of the fourth annual ACM symposium on
  Parallel algorithms and architectures, pp. 323--332 (1992)

\bibitem[{Wu and Loiseau(2015)}]{wu2015algorithms}
Wu, X., Loiseau, P.: Algorithms for scheduling deadline-sensitive malleable
  tasks. In: 2015 53rd Annual Allerton Conference on Communication, Control,
  and Computing (Allerton), pp. 530--537, IEEE (2015)

\end{thebibliography}

\clearpage

\appendix

\spnewtheorem{nclaim}{Claim}{\bfseries}{\itshape}

\crefalias{section}{appsec}
\crefalias{subsection}{appsec}

\section{Appendix to \cref{sec:lp}: Proofs of \cref{lem:LP-gs-feasibility,lem:LP-gs-comp-slack,lem:matroidstructure}}

\restatelpfeasibility*
\begin{proof}
Let $\mathbf{S}$ be an assignment with $L(\mathbf{S}) \leq C$. 
For each $j \in J$, we set $x'(S,j) = 1$ if $S = S_j$, and $x'(S,j) = 0$ otherwise. 
For each machine $i \in M$, we set $s'_i = C - \sum_{j \in J:i \in S_j} \frac{1}{g_j(S_j)}$. 
Note that
$$\sum_{S \subseteq M : S \neq \emptyset} \left(2 - \frac{1}{Cg_j(S)}\right) x'(S, j) = 2 - \frac{1}{Cg_j(S_j)} \geq 1$$ for any $j \in J$, because $L(\mathbf{S}) \leq C$ implies $g_j(S_j) \geq \frac{1}{C}$ for each each $j \in J$.
Moreover, observe that
$$\sum_{j \in J} \sum_{S \subseteq M:i \in M} \frac{1}{g_j(S)} x'(S,j) + s'_i = \sum_{j \in J: i \in S_j} \frac{1}{g_j(S_j)} + s'_i = C$$
for $i \in M$ by definition of $x'$ and $s'$. Finally, note that $x' \geq 0$ by definition and $s' \geq 0$ because the maximum machine load in $\mathbf{S}$ is at most $C$. 

For the second part of the statement, consider an optimal solution $(\lambda^*, \mu^*)$ to \eqref{LPgsD}~and let $(x^*, s^*)$
be an optimal solution to \eqref{LPgs}. 
Note that feasibility of $x^*$ implies that for any $j \in J$ there is at least one non-empty set $S \subseteq M$ with $x^*(S, j) > 0$ and $2 - \frac{1}{Cg_j(S)} > 0$.
Then complementary slackness implies
$\left(2 g_j(S) - \frac{1}{C}\right) \lambda_j - \sum_{i \in S} \mu_i = 0$, from which we derive
$$\lambda_j = \frac{\sum_{i \in S} \mu_i} { \left(2 g_j(S) - \frac{1}{C}\right) } > 0$$ because $\mu_i \geq 1$ for all $i \in M$ by dual feasibility. 
\qed
\end{proof}

\reatateLPCompSlack*
\begin{proof}
Let $S \subseteq M$ and $j \in J$ with $x(S, j) > 0$. Note that
$$(2 g_j(S) - \frac{1}{C} ) \lambda_j - \sum_{i \in S} \mu_i = 0$$ by complementary slackness.
This implies $2g_j(S) \geq \frac{1}{C}$, because $\lambda_j > 0$ and $\sum_{i \in S} \mu_i \geq 0$. Furthermore, it implies $g^*_j(S) = \frac{1}{C}$ by simple rearrangement.
Because $g^*_j(S') \leq \frac{1}{C}$ for all $S' \subseteq M$, we obtain $S \in \mathcal{D}_j$. \qed
\end{proof}

\restateMatroidStructure*
\begin{proof}
Let $j \in J$.
Because $g_j$ is an $M^\natural$-concave function, $\mathcal{D}_j$ is an $\textup{M}^\natural$-convex set~\citep{leme2017gross}, i.e., for any $S, T \in \mathcal{D}_j$ and $i \in S \setminus T$ one of the following two statements holds:
  \begin{enumerate}
    \item $T \cup \{i\} \in \mathcal{D}_j$ and $S \setminus \{i\} \in \mathcal{D}_j$.
    \item There is $i' \in T \setminus S$ such that $S \setminus \{i\} \cup \{i'\} \in \mathcal{D}_j$ and $T \setminus \{i'\} \cup \{i\} \in \mathcal{D}_j$.\label{eq:m-convex-exchange}
  \end{enumerate}
  Note that the set $\mathcal{F}_j$ is downward closed by construction. 
  Now let $S, T$ be two bases of $\mathcal{F}_j$.
  Note that $S$ and $T$ must be $\subseteq$-maximal sets in $\mathcal{D}_j$.
  Let $i \in S \setminus T$.
  Note that maximility of $T$ implies $T \cup \{i\} \notin \mathcal{D}_j$ and thus there must be $i' \in T \setminus S$ with $S \setminus \{i\} \cup \{i'\} \in \mathcal{D}_j$ and $T \setminus \{i'\} \cup \{i\} \in \mathcal{D}_j$. Hence $\mathcal{F}_j$ fulfills the basis exchange axiom and is a matroid. 
\qed
\end{proof}

\section{Appendix to \cref{sec:overview}: Solving the LP \eqref{LPgs}}
\label{app:solving-gs-lp}

\begin{lemma}
Optimal solutions to \eqref{LPgs} can be computed in polynomial time.
\end{lemma}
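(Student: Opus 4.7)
The plan is to solve \eqref{LPgs} by running the ellipsoid method on its dual \eqref{LPgsD}: the dual has only $|J|+|M|$ variables but exponentially many constraints, so the crux is a polynomial-time separation oracle. Once such an oracle is available, the classical equivalence of separation and optimization of Grötschel, Lovász and Schrijver yields both an optimal dual solution and an optimal primal solution supported on polynomially many variables, the latter obtained by solving the restricted primal defined by the constraints generated during the ellipsoid run.

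To build the oracle, note that the constraints $\lambda_j \geq 0$ and $\mu_i \geq 1$ are trivial to check, and the remaining constraints decompose by job. For a candidate $(\lambda,\mu)$ and a fixed $j \in J$, the existence of a non-empty $S \subseteq M$ violating $(2g_j(S)-1/C)\lambda_j - \sum_{i\in S}\mu_i \leq 0$ is equivalent to
\[
    \max_{S \subseteq M}\Bigl\{2\lambda_j\, g_j(S) - \sum_{i\in S}\mu_i\Bigr\} \;>\; \frac{\lambda_j}{C},
\]
where the $S=\emptyset$ case (contributing $2\lambda_j g_j(\emptyset)$) can be handled explicitly. When $\lambda_j=0$ the left-hand side is non-positive for every $S$ since $\mu \geq 1$, so no constraint can be violated. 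When $\lambda_j > 0$, the function $2\lambda_j\, g_j$ is itself $M^\natural$-concave, because $M^\natural$-concavity is preserved under multiplication by a positive scalar. Thus the inner maximization is a \emph{demand query} for an $M^\natural$-concave function at prices $\mu$, and such queries can be answered in polynomial time by the greedy algorithm for $M^\natural$-concave functions (see, e.g., \cite{murota2003discrete,leme2017gross}). Running this procedure for every $j \in J$ yields the desired polynomial-time separation oracle, returning a violated inequality whenever one exists.

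The main technical obstacle is the standard numerical bookkeeping needed to apply the ellipsoid method: one must exhibit a polynomial-size circumscribing ball and a polynomial lower bound on the volume of the feasible region when it is non-empty. Both follow from the polynomial bit-complexity of the coefficients of \eqref{LPgsD} (which depend only on the encoding of the instance and on $C$) via Khachiyan-type estimates, and the same bounds allow infeasibility to be detected in polynomial time. Combining these ingredients with the separation oracle above, the ellipsoid method returns, in polynomial time, either a certificate of infeasibility of \eqref{LPgs} or a primal--dual optimal pair $(x,s),(\lambda,\mu)$ as required.
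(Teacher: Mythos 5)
Your proof is correct and follows essentially the same route as the paper: ellipsoid on the dual \eqref{LPgsD} with a separation oracle that reduces, for each job $j$ with $\lambda_j>0$, to a demand query for an $M^\natural$-concave function answered by the greedy algorithm, followed by recovering the primal from the constraints generated during the ellipsoid run. Your additional remarks on the $S=\emptyset$ case, scalar invariance of $M^\natural$-concavity, and the bit-complexity bookkeeping are fine and only make the argument more explicit than the paper's.
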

\begin{proof}
We construct a polynomial-time separation oracle for the dual \eqref{LPgsD}. Given such an oracle, an optimal dual solution can be computed in polynomial-time using the ellipsoid method. Using the fact that this solution is of polynomial-size, an optimal solution to the primal can recovered by solving \eqref{LPgs} restricted to those variables for which the corresponding dual constraints were generated during the run of the ellipsoid method for the dual (note that this restricted LP is of polynomial size, as the ellipsoid method terminated after a polynomial number of steps). 

We now describe the separation oracle. Given any values $(\lambda, \mu)$, we can trivially verify the feasibility of $\lambda_j, \mu_i \geq 0$ and $\mu_i \leq 1$ for all $i \in M$ and $j \in J$ in linear time. 
In order to verify of the constraints
$(2g_j(S) - \frac{1}{C})\lambda_j - \sum_{i \in S} \mu_i \leq 0$ for all $j \in J$ and $S \subseteq M, S \neq \emptyset$, consider any fixed $j \in J$ separately. 
If $\lambda_j = 0$, we know that the constraint is fulfilled for all $S \subseteq M$.
If $\lambda_j > 0$, let $S^*$ be an optimal solution to $\max_{S \subseteq M}\{g_j(S) - \frac{\mu}{2\lambda_j}\}$
(such a solution can be found in polynomial time using the greedy algorithm because $g_j$ fulfills gross substitutability). 
If $g_j(S^*) - \frac{\mu}{2\lambda_j} > \frac{1}{2C}$, we found a separating hyperplane. Otherwise, we can conclude that $(\lambda, \mu)$ is feasible.
\qed
\end{proof}

\section{Appendix to \cref{sec:analysis}}

\subsection{\cref{sec:step1}: Feasibility of \eqref{LPgsa}}

\begin{lemma}\label{lem:gs-single-machine}
\eqref{LPgsa} has a feasible solution.
\end{lemma}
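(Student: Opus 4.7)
The plan is to construct an explicit feasible solution $y$ from the primal LP solution $x$ by restricting to the machines in $M^+_j$ and rescaling. Specifically, I would set
\[
  y_{ij} \; := \; 16 \sum_{S \subseteq M : i \in S} \frac{g_j(\{i\})}{g_j(S)} \, x(S, j)
\]
for each $j \in J_1$ and $i \in M^+_j$, and $y_{ij} := 0$ otherwise. Nonnegativity is immediate from $x \geq 0$.

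For the demand constraints, summing over $i \in M^+_j$ gives
\[
  \sum_{i \in M^+_j} y_{ij} \; = \; 16 \sum_{i \in M^+_j} \sum_{S : i \in S} \frac{g_j(\{i\})}{g_j(S)} x(S, j) \; \geq \; 1,
\]
where the inequality is exactly the definition of membership in $J_1$.

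For the machine capacity constraints, fix $i \in M$ and compute
\[
  \sum_{j \in J_1} \frac{1}{g_j(\{i\})} y_{ij} \; = \; 16 \elsum{j \in J_1 : i \in M^+_j}\; \sum_{S : i \in S} \frac{1}{g_j(S)} x(S, j) \; \leq \; 16 \sum_{j \in J} \sum_{S : i \in S} \frac{1}{g_j(S)} x(S, j) \; \leq \; 16C,
\]
where the last inequality follows from the machine constraint of \eqref{LPgs} together with $s_i \geq 0$.

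This is essentially all there is to it: the scaling factor $16$ is chosen precisely to cancel the $\frac{1}{16}$ in the definition of $J_1$, and the weights $\frac{g_j(\{i\})}{g_j(S)}$ redistribute each fractional assignment $x(S,j)$ into a convex-like combination over single-machine assignments within $M^+_j$ in a way that preserves the LP's machine load upper bound. No obstacle is anticipated — the construction is a direct translation of the membership condition defining $J_1$.
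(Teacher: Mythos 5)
Your construction is exactly the one used in the paper: the same rescaling by $16$, the same weights $g_j(\{i\})/g_j(S)$, and the same two feasibility checks. The proof is correct and essentially identical to the paper's.
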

\begin{proof}
For any $j \in J_1$, we define $y'_{ij} := 16 \sum_{S \subseteq M : i \in S} \frac{g_j(\{i\})}{g_j(S)} x(S, j)$ if $i \in M^+_j$, and $y'_{ij} := 0$, otherwise. Observe that $\sum_{i \in M^+_j} y'_{ij} \geq 1$ for all $j \in J_1$ by definition of $J_1$. Furthermore, note that 
  $$\textstyle \sum_{j \in J_1} \frac{1}{g_j(\{i\})} y'_{ij} \leq 16 \sum_{j \in J_1} \sum_{S \subseteq M : i \in S} \frac{1}{g_j(S)} x(S, j) \leq 16C$$ for all $i \in M$ by definition of $y'$ and by the constraints of \eqref{LPgs}. Thus $y'$ is a feasible solution to \eqref{LPgsa}. 
\qed
\end{proof}

\subsection{\cref{sec:step2}: Proof of \cref{lem:gs-welfare}}

\restateGSWelfare*
\begin{proof}
We first show that \eqref{LPwf} has optimal value $\frac{|J_2|}{C}$. To this end, define $\gamma_j := \sum_{S \in \mathcal{S}^2_j} x(S, j) \geq \frac{1}{8}$ for $j \in J_2$. 
  Let $z'(S, j) := \frac{1}{\gamma_j} x(S, j)$ for $S \in \mathcal{S}^2_j$ and let $z'(S, j) := 0$ for all $S \subseteq M$ with $S \notin \mathcal{S}^2_j$. 
  We show that $z'$ is a feasible solution to \eqref{LPwf} with value $\frac{|J_2|}{C}$.

  First observe that $\sum_{S \subseteq M} z'(S, j) = 1$ for each $j \in J_2$ by construction. Note further that by the constraints of \eqref{LPgsD} and the fact that $\lambda_j > 0$ by Lemma \ref{lem:LP-gs-feasibility}, we have
  $$g_j(S) \leq \frac{1}{2C} + \sum_{i \in S} \frac{\mu_i}{2\lambda_j} \leq \frac{5}{2C}$$ for all $j \in J$ and $S \in \mathcal{S}^2_j$.
  Hence,
  \begin{align*}
    \sum_{j \in J_2} \sum_{S \subseteq M : i \in S} z'(S, j) & \leq \sum_{j \in J_2} \sum_{S \in \mathcal{S}^2_j : i \in S} \frac{5}{2Cg_j(S)} \cdot \frac{x(S, j)}{\gamma_j}\\
   & \leq \frac{20}{C} \sum_{j \in J} \sum_{S \subseteq M} \frac{1}{g_j(S)} x(S, j) \leq 20
  \end{align*}
   for each $i \in M$, where the second inequality follows from $\gamma_j \geq \frac{1}{8}$ for all $j \in J_2$, while the third inequality follows by the primal constraints of \eqref{LPgs}. Thus $z'$ is a feasible solution to $\LPwf$.
  Because $z'(S, j) > 0$ implies $x(S, j) > 0$ and hence $g_j^*(S) = \frac{1}{C}$ by \cref{lem:LP-gs-comp-slack}, we conclude that $\sum_{j \in J_2} \sum_{S \subseteq M} g_j^*(S) z'(S, j) = \frac{|J_2|}{C}$.
  
Now let $z$ be an integral optimal solution to \eqref{LPwf}.
Because \eqref{LPwf} is totally dual integral, the above implies that $\sum_{j \in J_2} \sum_{S \subseteq M} g^*(S) z(S, j) = \frac{|J_2|}{C}$. Because $g_j^*(S) \leq \frac{1}{C}$ for all $S \subseteq M$ and $j \in J$ by the constraints of \eqref{LPgsD}, we conclude that for each $j \in J_2$ there is a unique set $S_j$ with $z(S_j, j) = 1$ and $g_j^*(S_j) = \frac{1}{C}$. Note that the latter implies $g_j(S_j) = \frac{1}{2}\big(g_j^*(S_j) + \sum_{i \in S_j} \frac{\mu_i}{\lambda_j}\big) \geq \frac{1}{2C}$.
\qed
\end{proof}

\subsection{\cref{sec:step3A}: Proof of \cref{lem:gs-split-solution}}
\label{app:proof-split-lemma}

In oder to proof \cref{lem:gs-split-solution}, we first establish the following lower bound on $\gamma_j$ as defined in \cref{sec:step3A}.

\begin{lemma}
\label{lem:split-gamma}
For every $j \in J_3$, it holds that $\gamma_j = \sum_{T \in \mathcal{S}^3_j} x(T, j) \geq \frac{39}{160}$.
\end{lemma}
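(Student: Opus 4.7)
The plan is to partition the support of $x(\cdot, j)$ into three categories and sandwich $\gamma_j$ between bounds obtained from different LP conditions. Set $A := \{S : x(S,j) > 0\} \cap \mathcal{S}^2_j$ (the low-total-weight assignments), let $C_j := \mathcal{S}^3_j$ be the target category, and let $B := \{S : x(S,j) > 0\} \setminus (A \cup C_j)$, i.e., the assignments of high total weight in which more than half of the dual weight comes from $M^+_j$. A short algebraic check using that each $\mathcal{A}_j(T)$ is a partition shows $\gamma_j = \sum_{S \in C_j} x(S, j)$, so the goal reduces to proving $\sum_{S \in B \cup C_j} x(S,j) \geq 2/5$ and $\sum_{S \in B} x(S,j) < 5/32$, which combine to give $\gamma_j > 2/5 - 5/32 = 39/160$.

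For the first (lower) bound, I would use the primal covering constraint of \eqref{LPgs} together with $j \notin J_2$. Complementary slackness (\cref{lem:LP-gs-comp-slack}) yields the identity $g_j(S) = \tfrac{1}{2C} + \tfrac{1}{2}\sum_{i \in S} \mu_i/\lambda_j$ whenever $x(S,j) > 0$, which forces $g_j(S) \leq 5/(2C)$ on $A$ and bounds the coefficient $2 - 1/(Cg_j(S))$ by $8/5$ there. Since $\sum_{S \in A} x(S, j) < 1/8$ by $j \notin J_2$, the $A$-contribution to the covering sum is strictly below $1/5$, so at least $4/5$ must come from $B \cup C_j$, where the coefficient never exceeds $2$.

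For the second (upper) bound, I would convert the condition $j \notin J_1$, namely $\sum_{i \in M^+_j} \sum_{S \ni i} \frac{g_j(\{i\})}{g_j(S)} x(S,j) < 1/16$, into a bound on $\sum_{S \in B} x(S, j)$ by establishing the uniform estimate $\frac{1}{g_j(S)} \sum_{i \in S \cap M^+_j} g_j(\{i\}) \geq 2/5$ for every $S \in B$. This proceeds in three moves: (i) $x(S,j) > 0$ implies $S \in \mathcal{F}_j$ by \cref{lem:LP-gs-comp-slack,lem:matroidstructure}, and matroid downward-closedness puts each singleton $\{i\} \subseteq S$ in $\mathcal{F}_j$, so \cref{lem:prices-to-speed} gives $g_j(\{i\}) \geq \mu_i/(2\lambda_j)$; (ii) the defining inequality of $B$ says $\sum_{i \in S \cap M^+_j} \mu_i/\lambda_j \geq \tfrac12 \sum_{i \in S} \mu_i/\lambda_j$, which combined with the complementary-slackness identity yields $\sum_{i \in S \cap M^+_j} g_j(\{i\}) \geq g_j(S)/2 - 1/(4C)$; (iii) $S \notin \mathcal{S}^2_j$ forces $g_j(S) > 5/(2C)$, making the correction term $1/(4Cg_j(S))$ strictly less than $1/10$.

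The main obstacle is the second step: the definition of $B$ lives in the dual world (weights $\mu_i/\lambda_j$) while the $J_1$ condition lives in the primal world (single-machine speeds $g_j(\{i\})$), so the two have to be bridged. This bridge is essentially what $M^\natural$-concavity buys: it ensures $\mathcal{F}_j$ is a matroid, which in turn gives the downward-closed membership of singletons and the price-to-speed lower bound of \cref{lem:prices-to-speed} that converts $\mu_i/\lambda_j$ into $g_j(\{i\})$ up to a factor of $2$. Once this uniform estimate on $B$ is in hand, plugging it into the $J_1$-complement inequality yields the desired $\sum_{S \in B} x(S, j) < 5/32$ and completes the bound on $\gamma_j$.
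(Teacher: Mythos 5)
Your proposal is correct and follows essentially the same route as the paper's proof: the same three-way decomposition of the support of $x(\cdot,j)$ (your $A$, $B$, $C_j$ are the paper's $\mathcal{S}^2_j$, $\mathcal{S}^1_j$, $\mathcal{S}^3_j$), the same use of the covering constraint with the $8/5$ coefficient bound on low-weight sets, and the same bridge from dual weights to single-machine speeds via \cref{lem:prices-to-speed} and the complementary-slackness identity, yielding the identical constants $1/5$, $5/32$, and $39/160$. Only the final arithmetic is packaged slightly differently (the paper folds both deductions into one $\frac{1}{2}(1-\frac{1}{5}-\frac{5}{16})$ computation), which is equivalent.
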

\begin{proof}
First observe that
  \begin{align*}
    \gamma_j & \; = \; \sum_{S \subseteq M} \bar{x}(S, j) \; = \; \sum_{S \subseteq M} \sum_{T \in \mathcal{T}_j(S)} \frac{g_j(S)}{\sum_{A \in \mathcal{A}_j(T)} g_j(A)}x(T, j) \\
    & \; = \; \sum_{T \in \mathcal{S}^3_j} \frac{\sum_{A \in \mathcal{A}_j(T)} g_j(A)}{\sum_{A \in \mathcal{A}_j(T)} g_j(A)}x(T, j) \; = \; \sum_{T \in \mathcal{S}^3_j} x(T, j).
  \end{align*}
  
  To bound the value of $\sum_{T \in \mathcal{S}^3_j} x(T, j)$ we establish two claims.
  For the first claim, we define $$\textstyle \mathcal{S}^1_j := \left\{S \subseteq M \st x(S, j) > 0,\; S \notin \mathcal{S}^2_j, \; \sum_{i \in S \setminus M^+_j} \frac{\mu_i}{\lambda_j} \leq \sum_{i \in S \cap M^+_j} \frac{\mu_i}{\lambda_j}\right\}.$$
  
  \begin{nclaim}
    $\sum_{S \in \mathcal{S}^1_j} x(S, j) \leq \frac{5}{32}$ for all $j \in J_3$.
  \end{nclaim}
  \begin{proof}
  Consider any $j \in J_3$ and any $S \in \mathcal{S}^1_j$. By Lemma~\ref{lem:prices-to-speed} and the fact that for each $i \in S$ we have $\{i\} \in \mathcal{F}_j$, observe that 
    $$\sum_{i \in S \cap M^+_j} g_j(\{i\}) \geq \frac{1}{2}\sum_{i \in S \cap M^+_j} \frac{\mu_i}{\lambda_j} \geq \frac{1}{4}\sum_{i \in S} \frac{\mu_i}{\lambda_j}$$ because $S \in \mathcal{S}^1_j$.
    Note further that $g_j(S) = \frac{1}{2} \left(\sum_{i \in S} \frac{\mu_i}{\lambda_j} + \frac{1}{C}\right) \leq \frac{5}{8}\sum_{i \in S} \frac{\mu_i}{\lambda_j}$ where the first identity follows from $x(S, j) > 0$ and complementary slackness for \eqref{LPgs} and the final inequality follows from $S \notin \mathcal{S}^2_j$, which implies $\sum_{i \in S} \frac{\mu_i}{\lambda_j} > \frac{4}{C}$.
    
    Combining the two inequalities above we conclude that $\sum_{i \in S \cap M^+_j} \frac{g_j(\{i\})}{g_j(S)} \geq \frac{2}{5}$ for all $S \in \mathcal{S}^1_j$.
   This implies 
  \begin{align*}
    \sum_{S \in \mathcal{S}^1_j} x(S, j) \leq \frac{5}{2} \sum_{S \in \mathcal{S}^1_j} \sum_{i \in S \cap M^+_j} \frac{g_j(\{i\})}{g_j(S)} x(S, j) \leq \frac{5}{32},
  \end{align*}
  because $\sum_{i \in M^+_j} \sum_{S \subseteq M} \frac{g_j(\{i\})}{g_j(S)} x(S, j) \leq \frac{1}{16}$ for all $j \in J_3 \subseteq J \setminus J_1$.
\hfill $\blacklozenge$
\end{proof}

\begin{nclaim}
$\sum_{S \in \mathcal{S}^2_j} \left(2 - \frac{1}{Cg_j(S)}\right)x(S,j) \leq \frac{1}{5}$ for all $j \in J_3$.
\end{nclaim}
\begin{proof}
The claim follows directly from $g_j(S) \leq \frac{1}{2}\left(\sum_{i \in S} \frac{\mu_i}{\lambda_j} + \frac{1}{C}\right) \leq \frac{5}{2C}$ for all $S \in \mathcal{S}^2_j$ and the fact that $\sum_{S \in S^2_j} x(S, j) \leq \frac{1}{8}$ for all $j \in J_3 \subseteq J \setminus J_2$. 
\hfill $\blacklozenge$
\end{proof}

Note that every $S \subseteq M$ with $x(S, j) > 0$ is contained in $\mathcal{S}^1_j \cup \mathcal{S}^2_j \cup \mathcal{S}^3_j$.
From feasibility of $x$ and the two claims above we conclude that
\begin{align*}
      \gamma_j = \elsum{S \in \mathcal{S}^3_j} x(S, j)
      & \geq \frac{1}{2} \sum_{S \in \mathcal{S}^3_j} \left(2 - \frac{1}{Cg_j(S)}\right) x(S, j)\\
        & \geq \frac{1}{2} \left(1 - \sum_{S \in \mathcal{S}^1_j \cup \mathcal{S}^2_j} \left(2 - \frac{1}{Cg_j(S)}\right) x(S, j)\right) \\
        & \geq \frac{1}{2} \left(1 - \frac{1}{5} - \frac{5}{16}\right) = \frac{39}{160}
    \end{align*}
  for each $j \in J_3$. This concludes the proof of \cref{lem:split-gamma}. \qed
\end{proof}

We are now ready to prove \cref{lem:gs-split-solution}.

\restategssplit*
\begin{proof}
  Property~\ref{eq:xprime-total} follows immediately from the construction of $x'$, since for every $j \in J_3$:
  $$\sum_{S \subseteq M} x'(S,j) = \frac{\sum_{S \subseteq M} \bar{x}(S, j)}{\gamma_j} = 1.$$
  
For proving that $x'$ fulfills Property~\ref{eq:xprime-matroid}, note that $x'(S, j) > 0$ implies that $S \in  \mathcal{A}_j(T)$ for some $T \subseteq M$ with $x(T, j) > 0$. Thus $S \subseteq T \setminus M_j^+ \in \mathcal{F}_j$ by \cref{lem:matroidstructure}.

To establish Property~\ref{eq:xprime-load}, we will use the following claim.

\begin{nclaim}\label{claim:xprime-makespan}
  $\sum_{j \in J_3} \sum_{S \subseteq M : i \in S} \frac{1}{g_j(S)}x'(S, j) \leq \frac{5C}{2\gamma_j}$ for all $i \in M$.
\end{nclaim}
\begin{proof}
Let $j \in J_3$. Note that 
\begin{align*}
  \elsum{A \in \mathcal{A}_j(T)} g_j(A) \geq \frac{1}{2} \sum_{i \in T \setminus M^+_j} \frac{\mu_i}{\lambda_j} \geq \frac{1}{4} \sum_{i \in T} \frac{\mu_i}{\lambda_j} \geq
  \frac{1}{5} \left(\sum_{i \in T} \frac{\mu_i}{\lambda_j} + \frac{1}{C}\right)
  = \frac{2}{5} g_j(T)
\end{align*}
for all $T \in \mathcal{S}^3_j$, where the first inequality follows from Lemma~\ref{lem:prices-to-speed} and the fact that $\mathcal{A}_j(T)$ is a partition of $T \setminus M^+_j$, while the second and third inequality follow from the definition of $\mathcal{S}^3_j$, which implies $\sum_{i \in T \setminus M^+_j} \frac{\mu_i}{\lambda_j} \geq \frac{1}{2}\sum_{i \in T} \frac{\mu_i}{\lambda_j}$ and $\sum_{i \in T} \frac{\mu_i}{\lambda_j} > \frac{4}{C}$.

 Now consider any machine $i \in M$ and observe that
\begin{align*}
  \sum_{S \subseteq M : i \in S} \frac{1}{g_j(S)}x'(S, j) & = \frac{1}{\gamma_j} \sum_{S \subseteq M : i \in S} \frac{1}{g_j(S)} \sum_{T \in \mathcal{T}_j(S)} \frac{g_j(S)}{\sum_{A \in \mathcal{A}_j(T)} g_j(A)}x(T, j) \\
  & = \frac{1}{\gamma_j} \sum_{T \in \mathcal{S}^3_j : i \in T} \frac{1}{\sum_{A \in \mathcal{A}_j(T)} g_j(A)}x(T, j) \\
  & \leq \frac{5}{2\gamma_j} \sum_{T \in \mathcal{S}^3_j : i \in T} \frac{1}{g_j(T)}x(T, j)
\end{align*}
by construction of $x'$. The claim then follows from the fact that 
$$\sum_{T \in \mathcal{S}^3_j : i \in T} \frac{1}{g_j(T)}x(T, j) \leq \sum_{j \in J_3} \sum_{S \subseteq M : i \in S} \frac{1}{g_j(S)} x(S, j) \leq C$$ by feasibility of $x$. 
\hfill $\blacklozenge$
\end{proof}

Recall that $x'(S, j) > 0$ for $j \in J_3$ and $S \subseteq M$ implies that $S \in \mathcal{A}_j(T)$ for some $T \subseteq M$. In particular, this implies $\sum_{i \in S} \frac{\mu_i}{\lambda_j} \leq \frac{4}{C}$, and hence $$\textstyle g_j(S) \leq \frac{1}{2} \left(\sum_{i \in S} \frac{\mu_i}{\lambda_j} + \frac{1}{C}\right) \leq \frac{5}{2C}$$ for all $S \subseteq M$ and $j \in J_3$ with $x'(S, j) > 0$, by the constraints of \eqref{LPgsD}. Combining this insight with Claim~\ref{claim:xprime-makespan} and the fact that $\gamma_j \geq \frac{39}{160}$ by \cref{lem:split-gamma}, we conclude that
\begin{align*}
  \sum_{j \in J_3} x'_{ij} & = \sum_{j \in J_3} \sum_{S \subseteq M : i \in S} x'(S, j)
  \leq \frac{5}{2C} \sum_{j \in J_3} \sum_{S \subseteq M : i \in S} \frac{1}{g_j(S)} x(S, j)\\
  & \leq \frac{5}{2C} \cdot \frac{5C}{2\gamma_j} \leq \frac{1000}{39} < 26
\end{align*}
for any $i \in M$, thus establishing Property~\ref{eq:xprime-load}.

Finally, to establish Property~\ref{eq:xprime-prices}, we first prove the following claim.

\begin{nclaim}\label{claim:average-prices}
  Let $j \in J_3$ and $T \in \mathcal{S}_j^3$. Then
  $\displaystyle \frac{\sum_{A \in \mathcal{A}_j(T)} g_j(A) \sum_{i \in A} \frac{\mu_i}{\lambda_j}}{\sum_{A \in \mathcal{A}_j(T)} g_j(A)} \geq \frac{79}{40C}$.
\end{nclaim}
\begin{proof}
For $S \subseteq M$ define $\Delta(S) := \sum_{i \in S} \frac{\mu_i}{\lambda_j}$.
  The claim is trivial if $\Delta(A) \geq \frac{2}{C} > \frac{79}{40C}$ for all $\mathcal{A}_j(T)$.
  Thus assume there is $A_0 \in \mathcal{A}_j(T)$ with $\Delta(A_0) < \frac{2}{C}$.
  Note that $\Delta(A) > \frac{4}{C} - \Delta(A_0) > \frac{2}{C}$ for all $A \in \mathcal{A}_j(T) \setminus \{A_0\}$ by construction of $\mathcal{A}_j(T)$. Furthermore, because $T \in \mathcal{S}^3_j$ and, hence, $\Delta(T \setminus M^+_j) \geq \frac{2}{C}$, there is at least one set $A_1 \in \mathcal{A}_j(T) \setminus \{A_0\}$.
  We obtain
  \begin{align*}
    \frac{\sum_{A \in \mathcal{A}_j(T)} g_j(A) \sum_{i \in A} \frac{\mu_i}{\lambda_j}}{\sum_{A \in \mathcal{A}_j(T)} g_j(A)} & \geq \min \left\{ \frac{g_j(A_0) \Delta(A_0) + g_j(A_1) \Delta(A_1)}{g_j(A_0) + g_j(A_1)}, \; \frac{2}{C} \right\},
  \end{align*}
  using the fact that $\frac{a+b}{c+d} \geq \min\{\frac{a}{c}, \frac{b}{d}\}$ for any $c,d > 0$.
  
  Note that
  $\frac{g_j(A_0) \Delta(A_0) + g_j(A_1) \Delta(A_1)}{g_j(A_0) + g_j(A_1)}$
  is bounded from below by the value of the following minimization problem:
  \begin{alignat*}{2}
    \minimize \quad && \frac{g_0 \Delta_0 + g_1 \Delta_1}{g_0 + g_1}\\
    \subjectto \quad && \Delta_0 + \Delta_1 & \geq \frac{4}{C}\\
    && \frac{1}{2} \Delta_0 \leq g_0  & \leq \frac{1}{2} (\Delta_0 + \frac{1}{C})\\
    && \frac{1}{2} \Delta_1 \leq g_1  & \leq \frac{1}{2} (\Delta_1 + \frac{1}{C})\\
    && g_0, g_1, \Delta_0, \Delta_1 & \geq 0
  \end{alignat*}
  By applying KKT conditions, it is easy to see that the minimum is attained when $\Delta_0 + \Delta_1 = \frac{4}{C}$, $g_0 = \frac{1}{2}(\Delta_0 + \frac{1}{C})$, and $g_1 = \frac{1}{2}\Delta_1$.
  From this, by setting $\Delta_0 = \frac{7}{4C}$, $\Delta_1 = \frac{9}{4C}$, $g_0 = \frac{11}{8C}$ and $g_1 = \frac{9}{8C}$, it is easy to derive that the optimal value of the above optimization problem is $\frac{79}{40C}$. 
\hfill$\blacklozenge$
\end{proof}

Claim~\ref{claim:average-prices}, together with the fact that $\gamma_j = \sum_{T \in \mathcal{S}^3_j} x(T, j)$, immediately implies that 
\begin{align*}
  \sum_{i \in M} \frac{\mu_i}{\lambda_j} x'_{ij} & = \sum_{S \subseteq M} x'(S, j) \sum_{i \in S} \frac{\mu_i}{\lambda_j} \\ 
  & = \sum_{S \subseteq M} \sum_{T \in \mathcal{T}_j(S)} \frac{g_j(S)}{\sum_{A \in \mathcal{A}_j(T)} g_j(A)} \frac{x(T,j)}{\gamma_j} \sum_{i \in S} \frac{\mu_i}{\lambda_j} \\ 
  &= \sum_{T \in \mathcal{S}^3_j} \frac{x(T, j)}{\gamma_j}  \frac{\sum_{A \in \mathcal{A}_j(T)} g_j(A) \sum_{i \in A} \frac{\mu_i}{\lambda_j}}{\sum_{A \in \mathcal{A}_j(T)} g_j(T)} \geq \frac{79}{40C},
\end{align*}
which concludes the proof of \cref{lem:gs-split-solution}.
\qed
\end{proof}

\subsection{\cref{sec:step3B}: Proofs of \cref{lem:pm-LP-gs,lem:gs-pmi-value}}

\restatePMIntegrality*
\begin{proof}
Note that the function $r'_{jk}(U) := \min \{r_j(U \cap M_{jk}), d_{jk}\}$ results from restricting the rank function $r_j$ of the matroid $\mathcal{F}_j$ to the set $M_{jk}$ and then truncating it at $d_{jk}$. As these operations preserve submodularity, $r'_j = \sum_{k=0}^{\infty} r'_{jk}$ is a submodular function. Thus the constraints $\sum_{i \in U} y_{ij} \leq r'_j(U)$ for all $U \subseteq M$ and $j \in J_3$ describe a polymatroid (corresponding to the direct sum of the polymatroids defined by each function $r'_j$ for $j \in J_3$, respectively).
  We conclude that \eqref{LPpm} corresponds to the intersection of two polymatroids and therefore has only integral extreme points and optimal solutions to the LP can be computed in polynomial time~\cite[Theorem 46.1]{schrijver03}. \qed 
\end{proof}

\restatePolymatroidValue*
\begin{proof}
For $j \in J_3$ and $k \in \mathbb{Z}$ define $\gamma_{jk} := d_{jk} / \sum_{i \in M_{jk}} x'_{ij}$ if $d_{jk} > 0$ and $\gamma_{jk} = 0$, otherwise.
  For $i \in M_{jk}$ define $y'_{ij} := \gamma_{jk} x'_{ij}$. We show that $y'$ is a feasible solution to \eqref{LPpm}.
  First, let $i \in M$. Using the fact that $\gamma_{jk} \leq 1$, by construction, and that $\sum_{j \in J_3} x'_{ij} \leq 26$, by Property \eqref{eq:xprime-load} of Lemma \ref{lem:gs-split-solution}, we obtain $\sum_{j \in J_3} y'_{ij} \leq 26$.
  Now let $j \in J_3$ and $U \subseteq M$. For $k \in \mathbb{Z}$ we obtain
  \begin{align*}
    \elsum{i \in U \cap M_{jk}} y'_{ij} &
    \,=\, \frac{d_{jk} \cdot \sum_{i \in U\cap M_{jk}} x'_{ij}}{\sum_{i \in M_{jk}} x'_{ij}}
    \,\leq\, \min \left\{d_{jk},\ \elsum{i \in U \cap M_{jk}} x'_{ij}\right\}
  \end{align*}
    by definition of $y'$ and $d_{jk}$. We further observe that
  \begin{align*}
    \elsum{i \in U \cap M_{jk}} x'_{ij} & 
     = \sum_{S \subseteq M} |S \cap U \cap M_{jk}| \, \cdot \, x'(S, j) \leq r_j(U \cap M_{jk}),
  \end{align*}
  where the inequality follows from $\sum_{S \subseteq M} x'(S, j) \leq 1$ for all $j \in J_3$ and 
  the fact that $x'(S, j) > 0$ implies $S \in \mathcal{F}_j$, by Property \ref{eq:xprime-matroid} of Lemma \ref{lem:gs-split-solution}, and hence $|S \cap U \cap M_{jk}| \leq r_j(U \cap M_{jk})$.
  We thus get $$\elsum{i \in U \cap M_{jk}} y'_{ij} \leq \min \{d_{jk},\, r(U \cap M_{jk})\}.$$
  As $M_{jk} \cap M_{jk'} = \emptyset$ for $k \neq k'$, we conclude that $\sum_{i \in U} y'_{ij} \leq r'(U)$. Hence $y'$ is a feasible solution to \eqref{LPpm}. 
  Finally, note that $\sum_{i \in M} \sum_{j \in J_3} y'_{ij} = \sum_{j \in J_3} \sum_{k = 0}^{\infty} d_{jk}$ by construction of $y'$. 
\qed
\end{proof}

\section{Appendix for \cref{sec:fair-allocation}: Proofs of \cref{cor:santa,cor:mmfa-natural}}

\restateSantaClaus*
\begin{proof}
Consider an instance of \textsc{Santa Claus} with agent set $A$, item set $I$, and utilities of the form $u_j(S) = \sum_{i \in S} u_{ij}$, with $u_{ij} \geq 0$ for each $i \in I$ and $j \in A$.
As described in the proof of \cref{thm:mmfa}, given a target value $V$, we can construct an instance of generalized malleable scheduling with $J := A$ and $M := I$ and processing speed functions $g_j(S) = \min \{u_j(S), V\}$ for $j \in J$.
We observe $f_j(S) = 1/g_j(S)$ fulfills the properties of a speed-implementable processing time function given in~\citep{fotakis2020malleable} with speeds $s_{ij} := u_{ij}$.

The rounding scheme described in~\citep[Section~3.1]{fotakis2020malleable} constructs an assignment $\mathbf{T}$ from an LP relaxation~\citep[Section~2]{fotakis2020malleable} with the property that $i \in T_j$ implies $x_{ij} > 0$ (where the $x_{ij}$ are the variables of the LP).
Importantly, the solutions to the LP fulfill
\begin{align}
	\sum_{j \in J : i \in T_j} f_{j}(\{i\}) x_{ij} \leq C \label{eq:speed-lp}
\end{align} 
(this follows from constraint (2) of the LP presented in~\citep{fotakis2020malleable}, where some of the coefficients $f_{j}(\{i\})$ are actually scaled up by a factor $\gamma_j/s_{ij} \geq 1$).

In the construction of the assignment $\mathbf{T}$, the jobs are split in two sets $\mathcal{J}^{(1)}$ and $\mathcal{J}^{(2)}$ such that each machine 
\begin{itemize}
	\item each machine processes at most one job from $\mathcal{J}^{(2)}$ and each $j \in \mathcal{J}^{(2)}$ is assigned to a set of machines with $f_j(T_j) \leq 2C$~\citep[Proposition~9]{fotakis2020malleable},
	\item and each  job $j \in \mathcal{J}^{(1)}$ is assigned to a single machine $i$ with $x_{ij} \geq \frac{1}{2}$.
\end{itemize}
Moreover 
$\sum_{j \in \mathcal{J}^{(1)} : i \in T_j} f_{j}(\{i\}) \leq 2 \cdot \sum_{j \in \mathcal{J}^{(1)} : i \in T_j} f_{j}(\{i\}) x_{ij}$ for each machine $i \in M$~\citep[Proposition~8]{fotakis2020malleable}. 
As a consequence, $f_{j}(\{i\}) \leq 2C$ and no machine is  assigned more than two jobs from $\mathcal{J}^{(1)}$. 
Moreover, if a machine $i$ is assigned two jobs from $j_1, j_2 \in \mathcal{J}^{(1)}$, then $x_{ij_1}, x_{ij_2} \geq \frac{1}{2}$ together with \eqref{eq:speed-lp} implies $x_{ij} = 0$ for all other jobs. 
Thus a machine can be assigned at most two jobs (either two jobs from $\mathcal{J}^{(1)}$, or one jobs from $\mathcal{J}^{(1)}$ and one from $\mathcal{J}^{(2)}$) and $g_j(T_j) = 1/f_j(T_j) \geq \frac{1}{2C}$ for all $j \in J$.
Translated back to the instance of \textsc{Santa Claus}, we conclude that in $\mathbf{T}$ no item is assigned more than two time and every agent receives a utility of $u_j(T_j) \geq g_j(T_j) \geq \frac{1}{2C} = \frac{V}{2}$. \qed
\end{proof}

\restateMMFAnatural*
\begin{proof}
Consider an instance of MMFA with agent set $A$, item set $I$, and $M^\natural$-concave utilities $u_j$ for each $j \in A$.
Let $V$ be a target value for the minimum utility.
For $i \in M$ and $j \in J$ define $p_{ij} := \max \{g_{j}(\{i\}) - V, 0\}$.
Construct an instance of generalized malleable scheduling with $J := A$ and $M := I$ and processing speed functions $g_j(S) := u_j(S) - \sum_{i \in S} p_{ij}$.
Note that $g_j$ inherits $M^\natural$-concavity from $u_j$.
Apply the constant-factor approximation for the \textsc{Assignment} problem from \cref{thm:assignment-approx} to this instance with load bound $C := 1/V$.

We first argue that if the algorithm concludes that no assignment of load at most $C$ exists, there is no solution to the MMFA instance that secures every agent a utility of at least $V$.
Assume there is a solution $\mathbf{T}$ to the MMFA instance with minimum utility at least $V$, then the following assignment $\mathbf{S}$ has a load of at most $C$ in the generalized malleable scheduling instance: For $j \in J$, let $S_j := T_j$ if $u_{j}(\{i\}) \leq V$ for all $i \in T_j$; otherwise, let $S_j := \{i\}$ for some $i \in T_j$ with $u_{j}(\{i\}) > V$.
Note that in the assignment $\mathbf{S}$ each machine is assigned at most one job (because in $\mathbf{T}$ every item is assigned to at most one agent) and that $g_{j}(S_j) \geq V$ by construction. Hence $L(\mathbf{S}) \leq 1/V = C$.

Now assume that the algorithm returns an assignment $\mathbf{S}$ with $L(\mathbf{S}) \leq 193C$. We show that the same assignment $\mathbf{S}$ provides a utility of $\frac{1}{193}V$ to every agents while assigning every item only to a constant number of agents.
Note that $u_j(S_j) \geq g_j(S_j) \geq \frac{1}{193C} = \frac{1}{193}V$ for all $j \in A$, where the first inequality follows from construction of $g_j$ and the second follows from $L(\mathbf{S}) \leq 193C$.
Moreover, observe that in the assignment $\mathbf{S}$ constructed by the algorithm, every machine is assigned to at most $20$ jobs from $J_2$ (see \cref{sec:step2}) and at most $26$ jobs from $J_3$ (see \cref{sec:step3A,sec:step3B}).
In addition, note that every job $j \in J_1$ is assigned to exactly one machine and that $\sum_{j \in J_1 : i \in S_j} \frac{1}{g_j(\{i\})} \leq 32C$ for every machine $i \in M$ (see \cref{sec:step1}). Because $g_j(\{i\}) = u_j(\{i\}) - p_{ij} \leq V = 1/C$ by construction of $g_j$, this implies that every machine is assigned to at most $32$ jobs from $J_1$. We conclude that $\mathbf{S}$, interpreted as a solution to the MMFA instance, assigns every item to at most $78$ agents, and every agent receives utility at least $\frac{1}{193}V$. \qed
\end{proof}

\end{document}